\documentclass[a4paper,12pt]{article}
\usepackage{amsmath}
\usepackage{graphicx}
\usepackage{enumitem}
\usepackage{url} 

\graphicspath{{./art/}}

\usepackage{xcolor}
\usepackage[normalem]{ulem}

\usepackage{amsthm}
\usepackage{amsfonts}
\usepackage{mathtools}
\usepackage{appendix}
\usepackage[]{lscape}

\newtheorem{proposition}{Proposition}
\newtheorem{theorem}{Theorem}
\newtheorem{corollary}{Corollary}

\theoremstyle{remark}

\theoremstyle{definition}
\newtheorem{definition}{Definition}
\newtheorem{example}{Example}
\newtheorem{example*}{Example}

\title{Nonparametric regression with dependent censoring or competing risks}

\author{
Jia-Han Shih\footnote{National Sun Yat-sen University, Department of Applied Mathematics, 80424, Kaohsiung, Taiwan, E--mail: jhshih@math.nsysu.edu.tw} \\
Simon M.S. Lo\footnote{United Arab Emirates University, Department of Economics and Finance, E--mail: losimonms@yahoo.com.hk} \\
Ralf A. Wilke\footnote{Copenhagen Business School, Department of Economics, Porcel{\ae}nshaven 16A, 2000 Frederiksberg, DK, E--mail: rw.eco@cbs.dk}
}

\begin{document}

\newcommand{\ds}{\displaystyle}
\newcommand{\nin}{\noindent}
\newcommand{\1}{1\hskip-0.28em \text{I}}
\newcommand{\non}{\nonumber}

\def\spacingset#1{\renewcommand{\baselinestretch}%
{#1}\small\normalsize} \spacingset{1}

\def\T{{ \mathrm{\scriptscriptstyle T} }}
\def\Pr{{ \mathrm{Pr} }}
\def\E{{ \mathrm{E} }}
\def\Var{{ \mathrm{Var} }}
\def\Cov{{ \mathrm{Cov} }}
\def\Corr{{ \mathrm{Corr} }}
\def\BSbeta{{ \boldsymbol{\beta} }}
\def\BSeta{{ \boldsymbol{\eta} }}
\def\BStheta{{ \boldsymbol{\theta} }}
\def\BSphi{{ \boldsymbol{\phi} }}
\def\21{{ 2 \mid 1 }}


\spacingset{1.9} 
\setlength{\textwidth}{16cm} 

\maketitle
\thispagestyle{empty}

\begin{abstract}
Single-index models or time-to-event models are frequently applied in empirical research. These models are non-identifiable in presence of unknown (dependent) censoring or competing risks and do not give informative results in empirical analysis unless rather strong, non-testable restrictions hold. Little is known, whether the known robustness properties of the single-index model carry over to models with dependent censoring or competing risks. This paper shows that the ratio of partial covariate effects on the margins is identifiable in nonparametric models with unknown dependent censoring or nonparametric competing risks models with nonparametric dependence structure, provided an exclusion restriction holds. Commonly used (semi)parametric models for the margin and independent censoring, such as Cox proportional hazards, accelerated failure time or proportional odds models, can be used to obtain relative covariate effects despite their misspecified censoring mechanism. Several nonparametric estimators for the general model are introduced and their numerical properties are studied.\\
\nin Keywords: dependent competing risks, copula, identifiability
\end{abstract}

\section{Introduction}
The non-identifiability of the dependent competing risks model as shown by Cox (1962), Tsiatis (1975) and Wang (2014) is a crux in two respects. First, in the absence of identifying restrictions, bounds on margins and covariate effects are wide and uninformative (Peterson, 1976). Second, any kind of identifying restriction which leads to more informative results is non-testable. Various approaches to identifiability have been considered in the literature. They require restrictions on either the margins, the copula or both. This paper contributes to this literature by showing that the ratio of two partial covariate effects on the margins are identifiable, provided an exclusion restriction holds. This is a very general result as it does not require restrictions on the copula and the margins of the two risks. Many popular single and independent competing risks models are special cases of the model, including the accelerated failure time, proportional odds and Cox proportional hazards models. While Li and Duan (1989) show the robustness of the single-index model with respect to a misspecified link function, this paper shows its robustness with respect to the unknown dependent censoring distribution.

Any identifiability result hinges on a set of restrictions. For example, the copula graphic estimator (Zheng and Klein, 1995, Carri\`{e}re, 1995) requires a fully assumed copula but does not restrict the margins. Hiabu et al. (2025) do not impose restrictions on the margins either. Under valid exclusion restrictions, they show identifiability of the parameter of an Archimedean copula. Lo and Wilke (2024) consider a semiparametric model for a risk of interest, while the other risk or the distribution of dependent censoring is unspecified. They show that the parameter of an Archimedean copula and subsequently the margins are identifiable and estimable. Lo and Wilke (2017) do not impose parametric restrictions on the margins. Under the assumption that the copula does not depend on the covariates, they show that the sign of a covariate effect on the margins is often identifiable. Willems et al. (2025) consider a model with a semiparametric model for the risk of interest, unknown distribution of dependent censoring and unknown censoring. Their approach gives bounds for the parameters which often reveal their sign. In this paper, we introduce an approach that is applicable when the risk of interest, the distribution of dependent censoring, and the copula are completely unspecified. In presence of an exclusion restriction, it is shown that the ratio of two partial covariate effects on the margin of the risk of interest is identifiable. With this result, it is possible to infer the relative size of the effect of a covariate and whether the two partial effects have the same or opposite sign.

Our identifiability result connects to contributions in the econometric literature which have found that some patterns of single risk models can be robustly estimated when frailty is misspecified or omitted in the empirical specification (Lancaster, 1985, Ridder, 1987). Nicoletti and Rondinelli (2010) find this for the ratio of two regression parameters. Modelling the joint frailty distribution corresponds to shaping the copula. Our result connects to these findings as their (semi-)parametric single risk models with independent censoring are special cases of our dependent competing risks model. Our approach is more general, though, as the ratio of partial covariate effects can be identified for any nonparametric model provided the exclusion restriction holds. We verify that the identifiable ratio of partial covariate effects corresponds to the ratio of regression parameters in popular survival models such as the Cox proportional hazards model, the proportional odds model or the piecewise-constant exponential model. By comparing the ratio of partial covariate effects from our nonparametric approach to the ratio of estimated parameters obtained by commonly used semi- or parametric models, it is possible to obtain evidence on whether the latter are correctly specified single-index models in the sense that the true model is a single-index model or that the single-index is correctly specified. This can be tested without knowledge of the link function and the censoring mechanism and without the model being identifiable.

The paper is organised as follows. Section \ref{sec:mod} presents the model and the main identifiability results. Section \ref{sec:examples} illustrates the results with several examples. Section \ref{sec:est} introduces several estimators for the ratio of partial covariate effects. Section \ref{sec:sim} presents numerical results of monte carlo studies which demonstrate the finite-sample behaviour and convergence of the suggested approaches. Section \ref{sec:app} provides empirical analysis with real data to demonstrate the applicability.

\section{The model}\label{sec:mod}
Let $T_1$ and $T_2$ be the marginal continuous survival times of risks 1 and 2, respectively. We assume a so-called exclusion restriction such that the continuous covariates $X$ and $Y$ are included in the risk 1 margin only. Both margins may depend on additional covariates $Z$, but we omit them in the model presentation to simplify the notation. Given the covariates $X = x$ and $Y = y$, by Sklar's Theorem, there always exists a unique copula $C:[0,1]^2 \to [0,1]$ such that
\begin{align}\label{copula}
\Pr ( T_1 > t_1, T_2 > t_2 \mid X = x, Y = y ) = C ( S_1 (t_1 \mid x,y), S_2 (t_2) ),
\end{align}
where
\begin{align*}
S_1 (t_1 \mid x,y) = \Pr ( T_1 > t_1 \mid X = x, Y = y), \quad S_2 (t_2) = \Pr ( T_2 > t_2).
\end{align*}
All $S_1 (t_1 \mid x,y)$, $S_2 (t_2)$, and $C(u,v)$ are nonparametric. Note that the covariates $X$ and $Y$ have no impact on either the dependence structure ($C$) or the risk 2 margin ($S_2$). Exclusion restrictions have already been used in the literature to show identifiability of the competing risks model. The model of Hiabu et al.\ (2025) also allows for nonparametric $S_1$ and $S_2$, but requires that $S_1$ does not depend on $X$ and $S_2$ does not depend on $Y$, and that $C$ is a known Archimedean copula with unknown dependence parameter $\theta$. They show identifiability of $\theta$ or the degree of dependence between risks under these restrictions.

The model defined in (\ref{copula}) has such mild restrictions that it is not identifiable. It is well known that the identifiability of models with nonparametric margins for the two risks requires a known copula including its parameters (Zheng and Klein, 1995, Carri\`{e}re, 1995). The goal of the paper can therefore not be to show identifiability of $S_1$, $S_2$, or $C$. Instead, it focuses on other functions in relation to partial effects which can be of high interest in applications as well. For this purpose, we first introduce the notion of relative covariate effects, which remains meaningful irrespective of any model assumptions (Subsections 2.1 and 2.2). Then, we show that the relative covariate effects have a broader interpretation under the model (\ref{copula}), i.e., the exclusion restriction (Subsection \ref{sec:excl-restr}).

\subsection{Identifiability of relative covariate effects}\label{sec:ident}
Let $T = \min (T_1,T_2)$ be the overall survival time and its survival function is defined by
\begin{align}\label{os}
\pi (t \mid x,y) = \Pr ( T > t \mid X = x, Y = y ).
\end{align}
Differentiating (\ref{os}) with respect to $x$ gives the covariate effect of $x$ on $\pi$
\begin{align*}
\pi'_x (t \mid x,y)
&= \frac{\partial}{\partial x} \pi (t \mid x,y),
\end{align*}
where $\partial_1 C (u,v) = \partial C (u,v) / \partial u$ is the first-order partial derivative of the copula $C$ with respect to the first argument. The first-order partial derivative of any copula exists almost everywhere and no condition in relation to this is added. Similarly, differentiating (\ref{os}) with respect to $y$, we obtain
\begin{align*}
\pi'_y (t \mid x,y)
&= \frac{\partial}{\partial y} \pi (t \mid x,y).
\end{align*}
Note that the partial derivatives $\pi'_x (t \mid x,y)$ and $\pi'_y (t \mid x,y)$ reflect how the covariate $X$ and $Y$ affect the overall survival $T$, as they represent the rates of change of $\pi (t \mid x,y)$ with respect to $X$ and $Y$, respectively. We define a relative covariate effect on $\pi (t \mid x,y)$  as follows.

\begin{definition}\label{d1}
The relative covariate effect based on the partial derivatives of $\pi (t \mid x,y)$  is defined by
\begin{align}\label{proportion}
\eta_\pi( t,x,y ) = \frac{\pi'_x (t \mid x,y)}{\pi'_y (t \mid x,y)}.
\end{align}
\end{definition}

\begin{proposition}\label{g}
The relative covariate effects $\eta_\pi ( t,x,y )$ in (\ref{proportion}) is identifiable.
\end{proposition}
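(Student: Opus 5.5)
The plan is to show that $\eta_\pi(t,x,y)$ is a functional of the distribution of the observables $(T,\delta,X,Y)$ alone, where $\delta$ indicates which risk occurs first. It then cannot differ between two specifications of the latent triple $(C,S_1,S_2)$ that generate the same observable distribution. I will work with $\pi$ exactly as defined in (\ref{os}). Taking $t_1=t_2=t$ in (\ref{copula}), it has the latent representation
\begin{align*}
\pi(t\mid x,y)=\Pr(T_1>t,T_2>t\mid X=x,Y=y)=C\bigl(S_1(t\mid x,y),S_2(t)\bigr).
\end{align*}
This makes explicit that $\pi$ depends on the unidentified objects $C$, $S_1$, $S_2$.

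The first step is to recall what identifiability means in this setting. A quantity is identifiable if any two model triples $(C,S_1,S_2)$ and $(\tilde C,\tilde S_1,\tilde S_2)$ that induce the same law of $(T,\delta)$ given $(X,Y)$ assign it the same value. So I take two such observationally equivalent triples.

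The second step uses that $T=\min(T_1,T_2)$ is always observed. Its conditional survival function is therefore determined by the observable law:
\begin{align*}
C\bigl(S_1(t\mid x,y),S_2(t)\bigr)=\Pr(T>t\mid X=x,Y=y)=\tilde C\bigl(\tilde S_1(t\mid x,y),\tilde S_2(t)\bigr).
\end{align*}
This holds for every $t$ and every $(x,y)$ in the support of the covariates. Equivalently, $\pi$ equals the sum over $j=1,2$ of the identified subdistributions $\Pr(T>t,\delta=j\mid x,y)$ evaluated at $t=0$ minus their values at $t$, so $\pi$ itself is identified.

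The third step passes from $\pi$ to its partial derivatives and then to the ratio. Because the equality above holds as functions of $(x,y)$ on the covariate support, the two sides have the same partial derivatives in $x$ and in $y$ wherever these exist. This gives $\pi'_x=\tilde\pi'_x$ and $\pi'_y=\tilde\pi'_y$, and hence $\eta_\pi=\tilde\eta_\pi$ wherever $\pi'_y(t\mid x,y)\neq 0$. In practice $\pi'_x$ and $\pi'_y$ can be recovered directly as derivatives of the conditional survival function of the observed $T$, for example by the estimators introduced later.

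The main obstacle is the regularity needed for this last step to be meaningful, rather than the identification argument itself. Two conditions are required. First, $(x,y)$ must lie in the interior of the covariate support, which is open since $X$ and $Y$ are continuous, so that equality on the support implies equality of partial derivatives. Second, $\pi$ must be differentiable in $(x,y)$ with $\pi'_y\neq0$. For the second condition I would use the chain rule on the copula representation, $\pi'_x=\partial_1C(S_1,S_2)\,\partial S_1/\partial x$, with $\partial_1 C$ existing almost everywhere as noted in the paper. I would state these conditions as the standing requirement under which Definition \ref{d1} makes sense. Under them the identifiability of $\eta_\pi$ follows from that of the observable survival function $\pi$.
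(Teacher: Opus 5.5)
Your argument is correct and matches the paper's one-line proof: $\pi(t\mid x,y)$ is the conditional survival function of the always-observed $T=\min(T_1,T_2)$, so it is identified, and therefore so are its partial derivatives and their ratio; the paper leaves your regularity conditions (an interior point of the covariate support, differentiability in $(x,y)$, and $\pi'_y\neq 0$) implicit. One harmless slip: $\sum_{j}\{\Pr(T>0,\delta=j\mid x,y)-\Pr(T>t,\delta=j\mid x,y)\}$ equals $1-\pi(t\mid x,y)$, not $\pi(t\mid x,y)$; it is the sub-survival functions $\Pr(T>t,\delta=j\mid x,y)$ themselves that sum to $\pi$.
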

It is clear that Proposition~\ref{g} holds since the overall survival $\pi$ is identifiable, and so are its partial derivatives and the relative covariate effects $\eta_\pi ( t,x,y )$.

Let $\Lambda (t \mid x,y) = - \log \pi (t \mid x,y)$ be the cumulative hazard function of the overall survival. It is straightforward to verify that (\ref{proportion}) is equivalent to
\begin{align*}
\eta_\pi ( t,x,y ) = \frac{\Lambda_x' (t \mid x,y)}{\Lambda_y' (t \mid x,y)},
\end{align*}
where
\begin{align*}
\Lambda_x' (t \mid x,y) = \frac{\partial}{\partial x} \Lambda (t \mid x,y), \quad
\Lambda_y' (t \mid x,y) = \frac{\partial}{\partial y} \Lambda (t \mid x,y)
\end{align*}
are the partial derivatives of $\Lambda$. It is therefore possible to identify $\eta ( t,x,y )$ on the grounds of the partial derivatives of $\pi$ or the partial derivatives of $\Lambda$.

\subsection{Using the conditional expectation of the overall survival \label{sec:condmean}}
In survival or competing risks analysis it is common to work with survival or cumulative hazard functions and the relevant statistical methods are well established. It turns out that the identifiability result, similar to Proposition \ref{g}, can be written in terms of a conditional mean function as well. Writing the problem in terms of mean functions can have practical advantages as it facilitates the use of averaging approaches to improve the stability of solutions.

We define the conditional mean of $T$ by
\begin{align}\label{mean}
m (x,y) &= \E( T \mid X = x, Y = y ).
\end{align}
Assume that the partial derivatives of $m (x,y)$ exist, and denote them by
\begin{align*}
m_x' (x,y) &= \frac{\partial}{\partial x} \E( T \mid X = x, Y = y ), \quad m_y' (x,y) = \frac{\partial}{\partial y} \E( T \mid X = x,Y = y ).
\end{align*}
The partial derivatives $m'_x (x,y)$ and $m'_y (x,y)$ reflect how the covariate $X$ and $Y$ affect the conditional mean of $T$, as they represent the rates of change of $m (x,y)$ with respect to $x$ and $y$, respectively. It is readily seen that the conditional mean $m (x,y)$ is identifiable, and so are its derivatives. In the following, we show that $\eta_\pi (t,x,y)$ can also be identified on the grounds of $m'_x (x,y)$ and $m'_y (x,y)$ provided it satisfies invariance restrictions. Let
\begin{align}
\eta_m (x,y) &= \frac{m'_x (x,y)}{m'_y (x,y)}, \label{mean-proportion} \\
\eta &= \frac{\E ( m'_x (X,Y) )}{\E ( m'_y (X,Y))}, \label{avg-mean-proportion}
\end{align}
be the relative covariate effects based on $m'_x (x,y)$ and $m'_y (x,y)$.

\begin{proposition}\label{eta2}
The relative covariate effects $\eta_m ( x,y )$ in (\ref{mean-proportion}) and $\eta$ in (\ref{avg-mean-proportion}) are identifiable.
\end{proposition}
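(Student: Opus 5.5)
The argument mirrors the one for Proposition~\ref{g}: everything follows from the fact that the law of the observables determines the conditional distribution of $T$ given $(X,Y)$. First, identifiability of the overall survival $\pi(t\mid x,y)$ (already used for Proposition~\ref{g}) gives identifiability of the conditional mean. Assuming $T\ge 0$ and $m(x,y)<\infty$, we have
\begin{align*}
m(x,y)=\int_0^\infty \pi(t\mid x,y)\,dt .
\end{align*}
More simply, $m(x,y)=\E(T\mid X=x,Y=y)$ is a functional of the joint distribution of the observed $(T,X,Y)$. This holds because $T=\min(T_1,T_2)$ is always observed in the competing risks setting, whatever the unknown $S_1$, $S_2$ and $C$ in (\ref{copula}) may be. Hence any two model specifications $(S_1,S_2,C)$ and $(\tilde S_1,\tilde S_2,\tilde C)$ generating the same observable distribution give the same function $m(\cdot,\cdot)$ on the support of $(X,Y)$.

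Next, since $m$ is identified as a function, its partial derivatives $m'_x$ and $m'_y$ are identified wherever they exist. The plan is to note that the derivatives are limits of difference quotients of $m$ at interior points of the support of $(X,Y)$. This is where the continuity of $X$ and $Y$ matters: the support should contain an open neighbourhood of $(x,y)$, so that the difference quotients involve only points at which $m$ is identified. Consequently $\eta_m(x,y)=m'_x(x,y)/m'_y(x,y)$ is identified at every such point with $m'_y(x,y)\neq 0$.

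For the averaged quantity $\eta$, the numerator $\E(m'_x(X,Y))$ is an integral of an identified function against the distribution of $(X,Y)$, which is itself observed. The same applies to the denominator. So $\eta$ is identified provided the expectations exist and $\E(m'_y(X,Y))\neq 0$.

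I expect the main obstacle to be regularity rather than anything conceptual. One has to state the implicit conditions (finite conditional mean, differentiability of $m$, an open support or interior points, and nonzero denominators) and make sure the derivatives are evaluated only where $m$ is determined by the data. I would handle this by stating these conditions explicitly as standing assumptions. Beyond that, the proof is a direct consequence of the observability of $T$.
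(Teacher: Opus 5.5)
Your proposal is correct and matches the paper's reasoning. The paper simply notes that $m(x,y)$ is identifiable because $T$ is observed, so its partial derivatives, $\eta_m$, and $\eta$ are identifiable too. Your explicit regularity conditions (finite mean, differentiability, interior points of the support, nonzero denominators) just make precise what the paper leaves implicit.
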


The following theorem studies the relationships between $\eta_\pi (t,x,y)$, $\eta_m (x,y)$, and $\eta$.


\begin{theorem}\label{equiv}
If $\eta_\pi (t,x,y)$ in (\ref{proportion}) does not depend on $t$, then $\eta_\pi (t,x,y) = \eta_m (x,y)$ for all $t$. Furthermore, if $\eta_m (x,y)$ in (\ref{mean-proportion}) does not depend on $x$ and $y$, then $\eta_m (x,y) = \eta$ for all $x,y$.
\end{theorem}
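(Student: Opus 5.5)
The plan is to write the conditional mean as an integral of the overall survival function, differentiate under the integral sign, and use the $t$-invariance of $\eta_\pi$ to factor the ratio out. Since $T \ge 0$, we have
\begin{align*}
m(x,y) = \int_0^\infty \pi(t \mid x,y)\, dt .
\end{align*}
Assuming enough regularity to differentiate under the integral sign (e.g.\ dominated convergence with an integrable dominating function for $\pi'_x$ and $\pi'_y$ near $(x,y)$), we get
\begin{align*}
m'_x(x,y) = \int_0^\infty \pi'_x(t \mid x,y)\, dt, \quad m'_y(x,y) = \int_0^\infty \pi'_y(t \mid x,y)\, dt .
\end{align*}
This regularity is implicit in the paper's assumption that the partial derivatives of $m$ exist. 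I will state it explicitly, since this interchange is the only technical point.

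For the first claim, suppose $\eta_\pi(t,x,y) = c(x,y)$ for all $t$. Then $\pi'_x(t\mid x,y) = c(x,y)\,\pi'_y(t\mid x,y)$ for every $t$ at which $\pi'_y \neq 0$. At points where $\pi'_y = 0$, the ratio being well defined forces us to read the hypothesis as this proportionality identity holding for all $t$. Integrating over $t$ gives $m'_x(x,y) = c(x,y)\, m'_y(x,y)$. Hence $\eta_m(x,y) = c(x,y) = \eta_\pi(t,x,y)$ for all $t$, provided $m'_y(x,y)\neq 0$, which is needed anyway for $\eta_m$ to be defined.

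For the second claim, suppose $\eta_m(x,y) = c$ is constant. Then $m'_x(x,y) = c\, m'_y(x,y)$ for all $(x,y)$. Taking expectations over the distribution of $(X,Y)$, under integrability of $m'_x(X,Y)$ and $m'_y(X,Y)$, gives $\E(m'_x(X,Y)) = c\,\E(m'_y(X,Y))$. Therefore $\eta = c = \eta_m(x,y)$, provided $\E(m'_y(X,Y)) \neq 0$.

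The main obstacle is justifying the interchange of differentiation and integration in the first step. Everything else is linearity of the integral and of expectation once the ratios are rewritten as proportionality identities. I would handle the interchange by assuming a standard domination condition. I would also note the nondegeneracy requirements $m'_y \neq 0$ and $\E(m'_y(X,Y)) \neq 0$, which are implicit in the definitions (\ref{mean-proportion}) and (\ref{avg-mean-proportion}).
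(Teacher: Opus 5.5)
Your proposal is correct and follows the paper's proof essentially step for step. Both write $m(x,y)=\int_0^\infty \pi(t\mid x,y)\,dt$, differentiate under the integral, integrate the identity $\pi'_x = h(x,y)\,\pi'_y$ over $t$, and take expectations over $(X,Y)$ for the second claim. The regularity conditions you state (domination for the interchange, integrability, and $m'_y \neq 0$, $\E(m'_y(X,Y)) \neq 0$) are left implicit in the paper, so stating them tightens the argument without changing it.
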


\begin{proof}
If $\eta_\pi (t,x,y)$ does not depend on $t$, i.e., there exists a function $h: \mathbb{R}^2 \to \mathbb{R}$ such that $\eta_\pi (t,x,y) = h(x,y)$ for all $t$, and using (\ref{proportion}), we have
\begin{align}\label{pf-1}
\pi'_x (t \mid x,y) = h( x,y ) \pi'_y (t \mid x,y)
\end{align}
for all $t$. Recall that since $T$ is a nonnegative random variable, the conditional expectation of $T$ given $X = x$ and $Y = y$ can be expressed as
\begin{align*}
m(x,y) = \int_0^\infty \pi ( t \mid x, y ) dt,
\end{align*}
where $f_T ( t \mid x,y )$ is the conditional density of $T$. Take partial derivatives on both sides of the last equation, we obtain
\begin{align*}
m_x'(x,y) = \int_0^\infty \pi_x' ( t \mid x, y ) dt, \quad m_y'(x,y) = \int_0^\infty \pi_y' ( t \mid x, y ) dt.
\end{align*}
Therefore, integrating $t$ on both sides of (\ref{pf-1}) leads to
\begin{align}\label{pf-2}
m'_x (x,y) = h( x,y ) m'_y (x,y),
\end{align}
which implies
\begin{align*}
\eta_\pi (t,x,y) = h( x,y ) = \frac{m'_x (x,y)}{m'_y (x,y)} = \eta_m(x,y).
\end{align*}

Similarly, if $\eta_m(x,y)$ does not depend on $x$ and $y$, i.e., there exists a constant $c$ such that $\eta_m(x,y) = c$ for all $x,y$. Taking expectation of $(X,Y)$ on both sides of (\ref{pf-2}) leads to
\begin{align*}
\E ( m'_x (X,Y) ) = c \E ( m'_y (X,Y) ),
\end{align*}
which implies
\begin{align*}
\eta_m(x,y) = c = \frac{\E ( m'_x (X,Y) )}{\E ( m'_y (X,Y))} = \eta.
\end{align*}
\end{proof}

We emphasise that although $\eta_\pi (t,x,y)$, $\eta_m (x,y)$, and $\eta$ are closely related, they are identifiable through different routes and are each meaningful on their own, which may lead to different empirical performance in practice. For instance, even if $\eta_m(x,y)$ varies with $x$ and $y$, i.e., $\eta_m (x,y) \not= \eta$ in general, estimating $\eta$ remains meaningful and can still be interpreted.

\subsection{The exclusion restriction \label{sec:excl-restr}}
We show that $\eta_\pi$, $\eta_m$, and $\eta$ in (\ref{proportion}), (\ref{mean-proportion}), and (\ref{avg-mean-proportion}), respectively, have a broader interpretation under the exclusion restriction in (\ref{copula}), i.e., the continuous covariates $X$ and $Y$ are included in the risk 1 margin ($T_1$) only. We show that they are not only informative about the ratio of partial effects on the overall survival, but about the ratio of partial effects on the margin of risk 1 as well.


Under the model (\ref{copula}), the overall survival function in (\ref{os}) becomes $\pi (t \mid x,y) = C ( S_1 (t \mid x,y), S_2 (t) )$. Then, the partial derivatives of $\pi$ are
\begin{align*}
\pi'_x (t \mid x,y) = \partial_1 C ( S_1 (t \mid x,y), S_2 (t) ) S'_{1,x} (t \mid x,y), \\ 
\pi'_y (t \mid x,y) = \partial_1 C ( S_1 (t \mid x,y), S_2 (t) ) S'_{1,y} (t \mid x,y), 
\end{align*}
where $\partial_1 C (u,v) = \partial C (u,v) / \partial u$ is the first-order partial derivative of the copula $C$ with respect to the first argument and
\begin{align*}
S'_{1,x} (t \mid x,y) = \frac{\partial}{\partial x} S_1 (t \mid x,y), \quad S'_{1,y} (t \mid x,y) = \frac{\partial}{\partial y} S_1 (t \mid x,y)
\end{align*}
are the partial derivatives of $S_1$. Combining the above results, we arrive at the conclusion that the relative covariate effect on the overall survival as defined in (\ref{proportion}) is the same as the relative covariate effect on the margin of risk 1. In addition, we can also establish results for the risk 1 margin that are related to those in Theorem~\ref{equiv}. Let
\begin{align*}
m_{1}(x,y) &= \E ( T_1 \mid X = x, Y = y )
\end{align*}
be the conditional mean of $T_1$. Assume that the partial derivatives of $m_1 (x,y)$ exist, and denote them by
\begin{align*}
m_{1,x}'(x,y) &= \frac{\partial}{\partial x} \E ( T_1 \mid X = x, Y = y ), \quad m_{1,y}'(x,y) = \frac{\partial}{\partial y} \E ( T_1 \mid X = x, Y = y ).
\end{align*}
We summarise the results in the following theorem.

\begin{theorem}\label{g-ex}
Under the exclusion restriction on $S_2$ in (\ref{copula}), we have
\begin{align}\label{margin-proportion-1}
\eta_\pi (t,x,y) = \frac{ S'_{1,x} (t \mid x,y)}{ S'_{1,y} (t \mid x,y)}.
\end{align}
If $\eta_\pi (t,x,y)$ in (\ref{proportion}) does not depend on $t$, then
\begin{align*} 
\eta_\pi (t,x,y) = \eta_m (x,y) = \frac{ m_{1,x}'(x,y) }{ m_{1,y}'(x,y)}
\end{align*}
holds for all $t$. In addition, if $\eta_\pi (t,x,y)$ in (\ref{proportion}) does not depend on $t$, $x$, and $y$, then
\begin{align*} 
\eta_\pi (t,x,y) = \eta_m (x,y) = \eta = \frac{ \E ( m_{1,x}'(X,Y) ) }{ \E ( m_{1,y}'(X,Y) ) }
\end{align*}
holds for all $t,x,y$.
\end{theorem}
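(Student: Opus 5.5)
The plan has three steps: a chain-rule computation for the first identity, then Theorem~\ref{equiv} combined with integration over $t$ on the marginal $T_1$ for the second, and finally averaging over $(X,Y)$ for the third.

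\textbf{Step 1: the identity (\ref{margin-proportion-1}).} Under (\ref{copula}) we have $\pi(t\mid x,y)=C(S_1(t\mid x,y),S_2(t))$. Because of the exclusion restriction, $S_2$ does not depend on $(x,y)$, so the chain rule gives the two displayed expressions for $\pi'_x$ and $\pi'_y$, each carrying the common factor $\partial_1 C(S_1(t\mid x,y),S_2(t))$. At points where this factor and $S'_{1,y}$ are nonzero (which is implicit in $\eta_\pi$ being well defined), the factor cancels in the ratio (\ref{proportion}), which yields (\ref{margin-proportion-1}).

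\textbf{Step 2: the case where $\eta_\pi$ does not depend on $t$.} Theorem~\ref{equiv} already gives $\eta_\pi=\eta_m$, so it remains to show $\eta_\pi=m'_{1,x}/m'_{1,y}$.
\begin{itemize}
\item Write $\eta_\pi(t,x,y)=h(x,y)$. By Step 1, $S'_{1,x}(t\mid x,y)=h(x,y)S'_{1,y}(t\mid x,y)$ for all $t$.
\item Since $T_1\ge 0$, we have $m_1(x,y)=\int_0^\infty S_1(t\mid x,y)\,dt$. Differentiating under the integral gives $m'_{1,x}=\int_0^\infty S'_{1,x}\,dt$ and likewise for $y$.
\item Integrating the previous identity in $t$ then gives $m'_{1,x}(x,y)=h(x,y)m'_{1,y}(x,y)$, hence $h=m'_{1,x}/m'_{1,y}$.
\end{itemize}
This is the same argument as in the proof of Theorem~\ref{equiv}, applied to $T_1$ instead of $T$.

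\textbf{Step 3: the case where $\eta_\pi$ does not depend on $t,x,y$.} Now $h\equiv c$ is constant. Step 2 together with Theorem~\ref{equiv} gives $\eta_m\equiv c$, and Theorem~\ref{equiv} then gives $\eta_m=\eta$. Separately, taking expectations over $(X,Y)$ in $m'_{1,x}(X,Y)=c\,m'_{1,y}(X,Y)$ gives $\E(m'_{1,x})=c\,\E(m'_{1,y})$. Hence $c=\E(m'_{1,x}(X,Y))/\E(m'_{1,y}(X,Y))$, provided the denominator is nonzero.

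\textbf{Main obstacle.} The only delicate point is justifying differentiation under the integral sign for $m_1$; the analogous step for $m$ is used without comment in Theorem~\ref{equiv}. I would handle it by assuming, as the paper implicitly does, that the partial derivatives of $m_1$ exist and that $S'_{1,x}$ and $S'_{1,y}$ are dominated by an integrable function of $t$. Nonvanishing of $\partial_1 C$ and of the denominators is likewise taken as given wherever the ratios are defined.
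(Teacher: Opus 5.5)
Your proposal is correct and follows essentially the same route as the paper. The paper's argument has the same three steps: the chain rule through $\pi(t\mid x,y)=C(S_1(t\mid x,y),S_2(t))$ with cancellation of the common factor $\partial_1 C$; then the argument of Theorem~\ref{equiv} repeated for $T_1$ via $m_1(x,y)=\int_0^\infty S_1(t\mid x,y)\,dt$; and finally averaging over $(X,Y)$. Your two caveats, differentiation under the integral and $\partial_1 C$ and $S'_{1,y}$ being nonzero for all $t>0$ so that $S'_{1,x}=h\,S'_{1,y}$ holds on the whole integration range, are regularity conditions the paper leaves implicit.
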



Since $\eta_\pi (t,x,y)$, $\eta_m (x,y)$, and $\eta$ are identifiable by Proposition~\ref{g} and Proposition~\ref{eta2}, Theorem~\ref{g-ex} implies that it is possible to identify the relative importance of the covariates on marginal survival times of the risk that depends on them. Notably, even under the exclusion restriction, one has
\begin{align*}
\eta_m (x,y) \not= \frac{ m_{1,x}'(x,y) }{ m_{1,y}'(x,y)}, \quad \eta \not= \frac{ \E ( m_{1,x}'(X,Y) ) }{ \E ( m_{1,y}'(X,Y) ) }
\end{align*}
in general. The identities hold when $\eta_\pi (t,x,y) = \eta_m (x,y)$ for all $t$ and $\eta_\pi (t,x,y) = \eta$ for all $t,x,y$ as stated in Theorem~\ref{g-ex}.

The model is very general and nests many well-known models as special cases. These include single risk model with independent censoring. Let risk 2 be the censoring time. Whenever the censoring distribution is covariate invariant, it satisfies the exclusion restriction. Many standard survival models require a covariate invariant independent censoring distribution. It is obvious that these models are nested in the model of this paper. The dependence structure in a competing risks model is an alternative way to write a model with dependent risk specific frailties (Lo et al., 2017; Lo et al., 2025). A copula model is therefore an alternative presentation of a dependent frailty model. Frailty terms become independent when there is no risk dependence. Our identifiability result therefore establishes the theoretical basis that certain relative patterns are identifiable without prior knowledge of frailty structures or the copula in our model.

Our findings can be connected to previous related observations in the econometrics literature by Nicoletti and Rondinelli (2010), Ridder (1987) and Lancaster (1985). The latter two study continuous time single risk models and show that the omission of frailty in their models leads to a scaling bias of all coefficients or just the intercept, respectively. Nicoletti and Rondinelli (2010) consider a discrete time single risk duration model with frailty and independent censoring. They provide numerical evidence with the help of simulations that the ratio of regression parameters $\beta_x$ and $\beta_y$ can be well estimated despite misspecification of the frailty distribution or omission of the frailty term, and despite that both estimated coefficients suffer from sizable estimation bias. It turns out that the ratio of the regression parameters $\beta_x$ and $\beta_y$ corresponds to the ratio of $S'_{1,x} (t \mid x,y)$ and $ S'_{1,y} (t \mid x,y)$ (or $m_{1,x}'(x,y)$ and $m_{1,y}'(x,y)$) in these models.

To establish such connection, we employ the so-called single-index model for the risk 1 margin $S_1$, which assumes that $S_1(t \mid x,y)$ depends solely on a linear combination of $x$ and $y$, namely the single-index $x \beta_x + y \beta_y$. The following theorem gives an equivalent statement of the single-index assumption.


\begin{theorem}\label{single-index}
Let $T_1$ be a continuous random variable and $(\beta_x,\beta_y) \not= (0,0)$ be the regression parameters on the continuous covariates $X$ and $Y$, respectively. Then,
\begin{align}\label{eq-si-1}
S_{1,x}'(t \mid x,y) = \frac{\beta_x}{\beta_y} S_{1,y}'( t \mid x,y)
\end{align}
for all $t,x,y$ if and only if there exists a function $G(t, a)$ such that: (i) for each $a$, $t \mapsto G(t, a)$ is a continuous survival function; (ii) for each $t$, $a \mapsto G(t, a)$ is differentiable; and
\begin{align}\label{eq-si-2}
S_1( t \mid x,y) = G( t, x \beta_x + y \beta_y)
\end{align}
for all $t,x,y$.
\end{theorem}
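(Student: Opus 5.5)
The "if" direction is a chain-rule computation; the "only if" direction is a first-order linear PDE in $(x,y)$ solved by a change of variables along the characteristic lines $x\beta_x+y\beta_y=\text{const}$. For the "if" direction, assume $S_1(t\mid x,y)=G(t,x\beta_x+y\beta_y)$ with $G$ differentiable in its second argument. Then
\begin{align*}
S'_{1,x}(t\mid x,y)=\beta_x\,\partial_a G(t,x\beta_x+y\beta_y),\quad S'_{1,y}(t\mid x,y)=\beta_y\,\partial_a G(t,x\beta_x+y\beta_y).
\end{align*}
Multiplying the second identity by $\beta_x/\beta_y$ gives (\ref{eq-si-1}) when $\beta_y\neq 0$. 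If $\beta_y=0$, the ratio in (\ref{eq-si-1}) is undefined, so I would read the relation in its cross-multiplied form $\beta_y S'_{1,x}=\beta_x S'_{1,y}$. This form holds trivially here as well, and I will use it throughout.

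For the "only if" direction, fix $t$ and treat $g(x,y)=S_1(t\mid x,y)$ as a function on $\mathbb{R}^2$ satisfying $\beta_y g_x-\beta_x g_y=0$. This says the directional derivative of $g$ along $(\beta_y,-\beta_x)$ vanishes. Since $(\beta_x,\beta_y)\neq(0,0)$, I rotate coordinates: set $a=x\beta_x+y\beta_y$ and $b=x\beta_y-y\beta_x$. This is an invertible linear map, with determinant $-(\beta_x^2+\beta_y^2)\neq 0$. Writing $\tilde g(a,b)=g(x,y)$, the chain rule gives $\partial_b\tilde g=c\,(\beta_y g_x-\beta_x g_y)=0$ for a nonzero constant $c$. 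Hence $\tilde g(a,b)$ is constant in $b$ for each $a$, by the mean value theorem along the line, which is connected.

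I then define $G(t,a)=S_1(t\mid x_a,y_a)$ for any point $(x_a,y_a)$ on the line $\{x\beta_x+y\beta_y=a\}$. A canonical choice is $(x_a,y_a)=a(\beta_x,\beta_y)/(\beta_x^2+\beta_y^2)$. By the previous step, $G$ is well defined and (\ref{eq-si-2}) holds. Property (i) holds because $t\mapsto G(t,a)$ is a conditional survival function of the continuous variable $T_1$. Property (ii) holds because $a\mapsto G(t,a)$ is the composition of a smooth affine map with $g$, which is differentiable.

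The main obstacle is regularity and the domain. The argument implicitly assumes that $S_1$ is differentiable in $(x,y)$ and that the covariate support is all of $\mathbb{R}^2$, or at least convex. On a nonconvex support, the level lines could be disconnected and constancy along them would fail. I would state this assumption explicitly, taking the support to be convex with $S_1$ differentiable in $(x,y)$, since (\ref{eq-si-1}) already presupposes these derivatives. I would also check that differentiability of $G$ in $a$ really follows from differentiability of $g$ along the line $a(\beta_x,\beta_y)/(\beta_x^2+\beta_y^2)$: this needs a differentiable $g$, not merely existing partial derivatives, or else separate differentiability along that direction.
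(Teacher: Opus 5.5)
Your proposal is correct and is essentially the paper's proof. It uses the same rotation $a=x\beta_x+y\beta_y$, $b=x\beta_y-y\beta_x$. It uses the same chain-rule identity showing that the $b$-derivative of $S_1(t\mid x(a,b),y(a,b))$ equals $(\beta_y S'_{1,x}-\beta_x S'_{1,y})/(\beta_x^2+\beta_y^2)=0$. It builds $G$ in the same way, from constancy in $b$. Your explicit ``if'' direction, the cross-multiplied reading for $\beta_y=0$, and the convex-support and differentiability caveats make precise what the paper leaves implicit.

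Your final worry is moot. Once $S_1(t\mid x,y)=G(t,x\beta_x+y\beta_y)$ is established, the mere existence of $S'_{1,x}$ (if $\beta_x\neq0$) or of $S'_{1,y}$ (if $\beta_y\neq0$) already forces $a\mapsto G(t,a)$ to be differentiable. For example, $S'_{1,x}(t\mid x,y)=\beta_x\lim_{k\to0}\{G(t,a+k)-G(t,a)\}/k$.
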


\begin{proof}
Define two new variables:
\begin{align*}
a = x \beta_x + y \beta_y, \quad b = x \beta_y - y \beta_x.
\end{align*}
The above change of variables is invertible and we can express $(x,y)$ in terms of $(s,t)$ by
\begin{align*}
x = x(a,b) = \frac{a \beta_x + b \beta_y}{\beta_x^2+\beta_y^2}, \quad y = y(a,b) = \frac{a \beta_y - b \beta_x}{\beta_x^2+\beta_y^2}.
\end{align*}
This means that every $(a,b)$ corresponds to exactly one $(x,y)$, and vice versa. For a fixed $t$, define
\begin{align*}
M(t,a,b) = S_1 ( t \mid x(a,b),y(a,b)  ).
\end{align*}
Since $S_1(t \mid x,y)$ is differentiable with respect to $x$ and $y$, and the maps $b \mapsto x(a,b)$ and $b \mapsto y(a,b)$ are smooth for each fixed $a$, by the chain rule,
\begin{align*}
\frac{\partial}{\partial b} M(t,a,b)
&= S_{1,x}'(t \mid x,y) \frac{\partial}{\partial b} x(a,b) + S_{1,y}'(t \mid x,y) \frac{\partial}{\partial b} y(a,b) \\
&= \frac{1}{\beta_x^2+\beta_y^2} ( S_{1,x}'(t \mid x,y) \beta_y - S_{1,y}'(t \mid x,y) \beta_x ).
\end{align*}
If (\ref{eq-si-1}) holds, we arrive that $\partial M(t,a,b) / \partial b = 0$ for all $a,b$, that is, the function $M(t,a,b)$ does not depend on $b$. Consequently, there exists a function $G(t,a)$ such that $a \mapsto G(t,a)$ is differentiable and $M(t,a,b) = G(t,a)$ for all $a,b$. We obtain
\begin{align*}
S_1(t \mid x,y) = G(t,a) = G( t, x \beta_x + y \beta_y ),
\end{align*}
hence $t \mapsto G(t,a)$ is a continuous survival function. The other direction is obvious.
\end{proof}

Theorem~\ref{single-index} reveals that the relationship (\ref{eq-si-1}) is inherent in the single-index model. Together with Proposition~\ref{g} and Theorem~\ref{g-ex}, we obtain the following powerful result, which covers a wide range of models in the literature.

\begin{corollary}\label{cor-si}
Under the exclusion restriction, if the risk 1 margin $S_1$ follows the single-index model in (\ref{eq-si-2}), the ratio of regression parameters $\beta_x$ and $\beta_y$ is identifiable, provided that $(\beta_x,\beta_y) \not= (0,0)$. In addition,
\begin{align*}
\frac{\beta_x}{\beta_y} = \eta_\pi(t,x,y) = \eta_m(x,y) = \eta
\end{align*}
for all $t,x,y$, where $\eta_\pi(t,x,y)$, $\eta_m(x,y)$, and $\eta$ are defined in (\ref{proportion})--(\ref{avg-mean-proportion}), respectively.
\end{corollary}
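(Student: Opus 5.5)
The plan is to chain Theorem~\ref{single-index}, Theorem~\ref{g-ex} and Theorem~\ref{equiv}, and then invoke Propositions~\ref{g} and~\ref{eta2} for identifiability.

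\emph{Step 1.} Start from the single-index representation (\ref{eq-si-2}) of the risk 1 margin. Apply the ``only if'' direction of Theorem~\ref{single-index}, which is really just the chain rule: $S'_{1,x}(t\mid x,y)=\beta_x\,\partial_a G(t,a)$ and $S'_{1,y}(t\mid x,y)=\beta_y\,\partial_a G(t,a)$ with $a=x\beta_x+y\beta_y$. This gives (\ref{eq-si-1}).

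\emph{Step 2.} By (\ref{margin-proportion-1}) in Theorem~\ref{g-ex}, which uses the exclusion restriction, $\eta_\pi(t,x,y)=S'_{1,x}/S'_{1,y}=\beta_x/\beta_y$ for all $t,x,y$. In particular, $\eta_\pi$ does not depend on $t$, $x$ or $y$.

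\emph{Step 3.} The second and third parts of Theorem~\ref{g-ex} (equivalently, both parts of Theorem~\ref{equiv}) now apply. They give $\eta_\pi(t,x,y)=\eta_m(x,y)$ for all $t$, and then $\eta_m(x,y)=\eta$ for all $x,y$, since $\eta_m$ inherits constancy from $\eta_\pi$.

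\emph{Step 4.} Propositions~\ref{g} and~\ref{eta2} say that $\eta_\pi$, $\eta_m$ and $\eta$ are identifiable, being functionals of the observable law of $(T,X,Y)$. Therefore $\beta_x/\beta_y$ is identifiable.

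The main obstacle is the non-degeneracy of the ratios. The hypothesis $(\beta_x,\beta_y)\neq(0,0)$ alone does not rule out $\beta_y=0$ or $\partial_aG\equiv 0$. The ratio $\beta_x/\beta_y$, and all three $\eta$'s, only make sense when $\beta_y\neq0$ and the relevant denominators are nonzero. These are $\pi'_y=\partial_1C\,\beta_y\,\partial_aG$, $m'_y$ and $\E(m'_y)$.

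I would handle this by stating the implicit convention that $\beta_y\neq0$; if $\beta_y=0$ but $\beta_x\ne0$, one works instead with the reciprocal $\beta_y/\beta_x$ symmetrically. I would also restrict to points $(t,x,y)$ where $\partial_1 C(S_1,S_2)\,\partial_a G\neq 0$. Such points exist whenever the covariates genuinely affect $S_1$ and the copula's first partial is nonzero, and at those points the equalities above hold pointwise.

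For $\eta_m$ and $\eta$, the integration step in Theorem~\ref{equiv} gives $m'_x=(\beta_x/\beta_y)\,m'_y$ and $\E(m'_x)=(\beta_x/\beta_y)\,\E(m'_y)$ regardless of whether the denominators vanish. So identifiability of the ratio only requires some $t$ with $\pi'_y\neq0$, or $\E(m'_y)\neq 0$. This requires the differentiation under the integral sign already assumed in Theorem~\ref{equiv}.
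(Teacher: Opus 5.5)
Your proposal is correct and follows the route the paper indicates: the single-index form gives (\ref{eq-si-1}) via Theorem~\ref{single-index}, then Theorem~\ref{g-ex} shows that $\eta_\pi=\beta_x/\beta_y$ is constant and hence equal to $\eta_m$ and $\eta$, and Propositions~\ref{g} and~\ref{eta2} give identifiability. (The step from (\ref{eq-si-2}) to (\ref{eq-si-1}) is the ``if'' direction of Theorem~\ref{single-index}, the one the paper calls obvious, not the ``only if'' direction; your explicit treatment of $\beta_y\neq0$ and of vanishing denominators makes precise what the paper leaves implicit.)
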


Corollary~\ref{cor-si} shows that the identity $\beta_x / \beta_y = \eta_\pi(t,x,y) = \eta_m(x,y) = \eta$ holds under two conditions: (i) the risk 1 margin $S_1$ follows a single-index model with continuous covariates $X$ and $Y$ and a proper link function $G$; and (ii) the risk 2 margin $S_2$ is covariate-invariant. Importantly, this result holds regardless of other conditions, such as the specific form of $G$ or the dependence structure between the two risks. The conditions (i) and (ii) are rather mild and are satisfied by many well-known statistical regression models. Corollary~\ref{cor-si} therefore implies that $\eta$ is invariant in these models and can be easily obtained by dividing the parameters on the covariates without knowing the link function and the copula. It is already known that the parameters of single-index models with unknown link function are identifiable up to a multiplicative scalar (Li and Duan, 1989). Our result generalises this to situations with dependent censoring or dependent competing risks.


\section{Examples \label{sec:examples}}

We illustrate Corollary~\ref{cor-si} using prominent examples satisfying $\beta_x / \beta_y = \eta$, where $S_1$ follows popular and well-known regression models, $S_2$ is an unknown but covariate-invariant censoring distribution, and $C$ is unknown as well. An example of $\beta_x / \beta_y \not= \eta$ is also provided.


\subsection{Single-index models}

\begin{example}[Cox PH]\label{ex-1}
The Cox proportional hazards (PH) model is given by
\begin{align*} 
S_1 (t \mid x,y) = S_{10}(t)^{\exp ( x \beta_x + y \beta_y  )},
\end{align*}
where $S_{10}(t) = S_1 (t \mid 0,0)$ is the baseline survival function of the risk 1 margin.


\end{example}


\begin{example}[PO]\label{ex-2}
The proportional odds (PO) model is given by
\begin{align*} 
S_1 (t_1 \mid x,y) = \frac{S_{10} (t)}{S_{10} (t) + (1 - S_{10}(t)) \exp ( x \beta_x + y \beta_y ) }.
\end{align*}

\end{example}

\begin{example}[AFT]\label{ex-3}
The accelerated failure time (AFT) model is given by
\begin{align*}
S_1 ( t \mid x,y) = S_{10} \left( t \exp(x \beta_x + y \beta_y) \right).
\end{align*}

\end{example}

\begin{example}[Piecewise exponential]\label{ex-4}
Consider a piecewise exponential (constant hazard) model on each unit interval with its hazard function given by
\begin{align*}
\lambda_1 ( s \mid x,y,\epsilon ) = \log ( 1 + e^{ - x \beta_x - y \beta_y + f(t) - \epsilon } ), \quad s \in [t-1,t), \; t = 1,2,\ldots,
\end{align*}
where $f(t)$ is a deterministic function of elapsed duration and $\epsilon$ is an individual random effect. For clarity, we denote
\begin{align*}
z_t = x \beta_x + y \beta_y - f(t) + \epsilon, \quad p_t = \frac{1}{1 + e^{z_t}}.
\end{align*}
Then the cumulative hazard is expressed as
\begin{align*}
\Lambda_1 ( t \mid x,y,\epsilon )
&= (t - \lfloor t \rfloor) \lambda_1 ( t \mid x,y,\epsilon ) + \sum_{k = 1}^{\lfloor t \rfloor} \lambda_1 ( k \mid x,y,\epsilon ) \\
&= (t - \lfloor t \rfloor) \log ( 1 + e^{- z_t } ) + \sum_{k = 1}^{\lfloor t \rfloor} \log ( 1 + e^{- z_k } ) \\
&= - (t - \lfloor t \rfloor) \log ( 1 - p_t ) - \sum_{k = 1}^{\lfloor t \rfloor} \log ( 1 - p_k ),
\end{align*}
where $\lfloor x \rfloor$ denotes the largest integer that is less or equal to $x$. Thus, the survival function is
\begin{align*}
S_1 ( t \mid x,y,\epsilon )
&= \exp \left( (t - \lfloor t \rfloor) \log ( 1 - p_t ) + \sum_{k = 1}^{\lfloor t \rfloor} \log ( 1 - p_k ) \right) \nonumber \\
&= ( 1 - p_t )^{t - \lfloor t \rfloor} \prod_{k = 1}^{\lfloor t \rfloor} ( 1 - p_k ).
\end{align*}
Note that $p_t$ is a function of $x \beta_x + y \beta_y$ and so is $S_1 ( t \mid x,y,\epsilon )$.


In fact, the piecewise exponential model corresponds to the discrete duration model in Nicoletti and Rondinelli (2010). To see this, note that for any integer $t$, we have
\begin{align*}
\Pr ( T_1 \geq t \mid T_1 \geq t-1, X = x, Y = y, \mathcal{E} = \epsilon)
&= \frac{\Pr ( T_1 \geq t \mid X = x, Y = y, \mathcal{E} = \epsilon )}{ \Pr ( T_1 \geq t-1 \mid X = x, Y = y, \mathcal{E} = \epsilon ) } \\
&= \frac{ S_1 (t \mid x,y,\epsilon) }{ S_1 (t-1 \mid x,y,\epsilon) } = 1 - p_t.
\end{align*}
Therefore, the discrete hazard in the interval $[t-1,t)$ is
\begin{align*}
&{}\Pr ( T_1 \in [t-1,t) \mid T_1 \geq t-1, X = x, Y = y,\mathcal{E} = \epsilon) \\
&= \Pr ( T_1 < t \mid T_1 \geq t-1, X = x, Y = y,\mathcal{E} = \epsilon) \\
&= 1 - \Pr ( T_1 \geq t \mid T_1 \geq t-1, X = x, Y = y,\mathcal{E} = \epsilon)
= p_t = \frac{1}{1 + e^{z_t}},
\end{align*}
which is the same as Equations~(3)--(4) in Nicoletti and Rondinelli (2010). Note that the random effect $\mathcal{E}$ does not affect the result as long as
\begin{align*}
\frac{\partial}{\partial x} \E_{\mathcal{E}} \left\{ S_1 ( t \mid x,y,\mathcal{E} ) \right\} = \E_{\mathcal{E}} \left\{ \frac{\partial}{\partial x} S_1 ( t \mid x,y,\mathcal{E} ) \right\}.
\end{align*}
\end{example}

It is obvious that the models in Examples~\ref{ex-1}--\ref{ex-4} are included in the single-index model (\ref{eq-si-2}), hence the ratio $\beta_x / \beta_y$ is identifiable and equal to relative covariates effects under the exclusion restriction by Corollary~\ref{cor-si}.

\subsection{Non-single-index model \label{sec:nonSI}}

\begin{example}[Two-Hazards]\label{ex-5} 
We consider a two-hazards additive model in which the marginal hazard for risk 1 is the sum of a time-linear PH term driven by $X$ and a time-constant PH term driven by $Y$:
\begin{align*}
\lambda_1 ( t \mid x,y ) = a_1 e^{x \beta_x} t + b_1 e^{y \beta_y},
\end{align*}
where $a_1,b_1 > 0$ are parameters. The corresponding margin $S_1$ is
\begin{align}\label{non-si}
S_1 ( t \mid x,y ) = \exp \left\{ - \left( \frac{1}{2} a_1 e^{x \beta_x} t^2 + b_1 e^{y \beta_y} t \right) \right\}.
\end{align}
Obviously, the model (\ref{non-si}) does not belong to the single-index model in (\ref{eq-si-2}). By Theorem~\ref{equiv}, we differentiate (\ref{non-si}) with respect to both $x$ and $y$ and take their ratio to obtain
\begin{align*}
\eta_\pi(t,x,y) = \frac{S_{1,x}'(t \mid x,y)}{S_{1,y}'(t \mid x,y)} = \frac{\beta_x}{\beta_y} \frac{a_1  e^{x \beta_x}}{2b_1  e^{y \beta_y}} t \not= \frac{\beta_x}{\beta_y},
\end{align*}
which demonstrates that (\ref{eq-si-1}) does not hold without those conditions in Theorem~\ref{single-index}. In this case, the meaning of the ratio $\beta_x / \beta_y$ is unclear, however, the relative covariate effect $\eta_\pi(t,x,y)$ can always be interpreted through the overall survival $\pi$ based on its own definition in (\ref{proportion}), and through the risk 1 margin $S_1$ based on the relationship in (\ref{margin-proportion-1}) under the exclusion restriction. Since $\eta_\pi (t,x,y)$ depends on $t$ in this model,
\begin{align*}
\eta_\pi (t,x,y) \not= \eta_m (x,y) \not= \frac{m_{1,x}'(x,y)}{m_{1,y}'(x,y)}
\end{align*}
and $\eta_m (x,y)$ can only be interpreted in terms of the overall survival $\pi$ (compare (\ref{mean-proportion})). To see that $\eta_m (x,y)$ is a complicated function of $S_2$ and $C$ in this model, we provide an example. Let
\begin{align*}
S_2 (t) = \exp \left\{ - \left( \frac{1}{2} a_2 t^2 + b_2  t \right) \right\},
\end{align*}
where $a_2,b_2 > 0$ are parameters. Also, let $C$ be the Clayton copula $
C_\theta (s_1,s_2) = (s_1^{-\theta}+s_2^{-\theta}-1)^{-1/\theta}$
with dependence parameter $\theta > 0$. Then, one can evaluate
\begin{align}\label{ex-5-mc}
\eta_m (x,y) = \frac{m_{x}'(x,y)}{m_{y}'(x,y)} = \frac{\displaystyle \int_0^\infty \partial_1 C_\theta (S_1 (t \mid x,y),S_2 (t)) S_{1,x}' (t \mid x,y) dt}{\displaystyle \int_0^\infty \partial_1 C_\theta (S_1 (t \mid x,y),S_2 (t)) S_{1,y}' (t \mid x,y) dt}
\end{align}
numerically and it depends on $a_2$, $b_2$, and $\theta$, i.e., the parameters of $S_2$ and $C_\theta$.

To verify this analytically, we derive $\eta_m (x,y)$ in the limit as $\theta \to 0$, where $C_\theta$ reduces to the independence copula. In this case, we have
\begin{align*}
\pi (t \mid x,y) = \exp \left\{ - \left( A(x) t^2 + B(y)  t \right) \right\},
\end{align*}
where $A(x) = \left( a_1 e^{x \beta_x} + a_2 \right) / 2$ and $B(y) = b_1 e^{y \beta_y} + b_2$. After a lengthy calculation, we get
\begin{align*}
m(x,y) = \int_0^\infty \pi (t \mid x,y) dt = \sqrt{\frac{\pi}{2 A(x)}} \exp \left( \frac{B(y)}{4 A(x)} \right) \mathrm{erfc} \left( \frac{B(y)}{2 \sqrt{A(x)}} \right),
\end{align*}
where
\begin{align*}
\mathrm{erfc} (z) = \frac{2}{\sqrt{\pi}} \int_z^\infty e^{-u^2} du.
\end{align*}
Differentiate $m(x,y)$ with respect to $x$ and $y$ and take their ratio, we have
\begin{align*}
\eta_m (x,y) = \frac{m_{x}'(x,y)}{m_{y}'(x,y)} = \frac{\beta_x}{\beta_y} \frac{a_1 e^{x \beta_x} }{2 b_1 e^{y \beta_y}} \frac{\left(2A(x)+B(y)\right)^2m(x,y)-B(y)}{2A(x) ( 1 - B(y) m(x,y) ) },
\end{align*}
which clearly depends on $a_2$ and $b_2$, i.e., the parameters of $S_2$.

\end{example}


\section{Estimation \label{sec:est}}
The ratio $\eta_\pi (t,x,y)$ is identifiable from $\pi (t \mid x,y)$ and its derivatives, or $\Lambda(t \mid x,y)$ and its derivatives. The ratio $\eta_m (x,y)$ is identifiable from the conditional mean function $m(x,y)$ and its derivatives. The ratio $\eta$ is identifiable from the derivatives of $m(x,y)$ after applying an averaging step over the joint distribution of $(X,Y)$. In this section, we consider appropriate nonparametric estimation approaches for the unknown functionals. Estimation of $\pi (t \mid x,y)$ and its derivatives has already been considered in Hiabu et al.\ (2025). The estimators for $\Lambda(t \mid x,y)$ and $m(x,y)$ and their derivatives are stated below.

The data consist of a random sample $(t_i,x_i,y_i)$ for $i = 1,\ldots,N$. The cumulative overall hazard function is nonparametric and can be recovered from the overall hazard function in the usual way. We therefore first define the nonparametric estimator for $\lambda (t \mid x,y) = f(t \mid x,y)/ \pi(t \mid x,y)$, where $f$ is the density of $T$. A nonparametric estimator is
\begin{equation*}
\hat{\lambda}(t \mid x,y)=\frac{\sum_{i=1}^N\1(t_i=t)K_{h_x}( x-x_i )K_{h_y}( y-y_i )}{\sum_{i=1}^N\1(t_i>t) K_{h_x}( x-x_i )K_{h_y}( y-y_i ) },
\end{equation*}
where $\1 ( \cdot )$ is the indicator function, $K_{h_x} (x) = K(x/h_x)/h_x$ is a Kernel function and $h_x$ is the bandwidth. $K_{h_y} (y)$ is analogously defined. The estimator utilises a product Kernel for the two-dimensional smoothing problem. The estimator for $\Lambda(t \mid x,y)$ is the Nelson-Aalen type estimator
\begin{equation*}
\hat{\Lambda}(t \mid x,y)=\sum_{j|t_j\leq t}\hat{\lambda}(t_j \mid x,y),
\end{equation*}
where the $t_j$s are the distinct realisations of $T$. The estimator for $\Lambda'_x(t \mid x,y)$ is
\begin{multline*}
\hat{\Lambda}'_x(t \mid x,y) \\ = \frac{\left\{\sum_{i=1}^N \1\{t_i> t\} K_{h_x}'( x-x_{i} )K_{h_y}( y-y_{i} )\right\}\left\{\sum_{i=1}^N K_{h_x}( x-x_{i} )K_{h_y}(y-y_{i})\right\}}{\left\{\sum_{i=1}^N K_{h_x}( x-x_{i} )K_{h_y}(y-y_{i}) \right\}^2} \\
- \frac{\left\{\sum_{i=1}^N K_{h_x}'( x-x_{i} )K_{h_y}(y-y_{i}) \right\}\left\{\sum_{i=1}^N \1\{t_i> t\} K_{h_x}( x-x_{i} )K_{h_y}(y-y_{i}) \right\}}{\left\{\sum_{i=1}^N K_{h_x}( x-x_{i} )K_{h_y}(y-y_{i}) \right\}^2} ,
\end{multline*}
where $K_{h}'(x)$ denotes the partial derivative of $K_{h}(x)$ with respect to $x$. $\hat{\Lambda}'_y(t \mid x,y)$ is the estimator for $\Lambda'_y(t \mid x,y)$ and is defined analogously. Once these estimates are available, the ratio $\eta_\pi (t,x,y)$ in (\ref{proportion}) can be estimated by
\begin{align}\label{est-pi}
\hat{\eta}_\pi (t,x,y)= \frac{\hat{\pi}'_x (t \mid x,y) }{\hat{\pi}'_y (t \mid x,y) }
\end{align}
or
\begin{align}\label{est-Lambda}
\hat{\eta}_{\Lambda}(t,x,y)= \frac{\hat{\Lambda}_x' (t \mid x,y)}{\hat{\Lambda}_y' (t \mid x,y)}.
\end{align}
Hiabu et al.\ (2025) state the Kernel estimator for $\hat{\pi}'_x$ and $\hat{\pi}'_y$ and establish weak consistency under appropriate bandwidth and Kernel choice (Hiabu et al., 2025, Corollary 1). The weak consistency of $\hat{\Lambda}_x'$ and $\hat{\Lambda}_y'$ can be established in an analogous way. The estimators $\hat{\eta}_{\pi}$ and $\hat{\eta}_{\Lambda}$ are the averages of the ratio of two weakly consistent nonparametric estimates and therefore inherit this property. These ratios can respond very strongly to finite-sample biases, though. Hiabu et al.\ (2025) study similar ratios of estimators and observe poor finite-sample properties. They introduce trimming rules to improve finite sample behaviour. In our implementation, we also work with trimming rules that restrict the range of $t$ and exclude evaluations of the ratio when the denominator is small.

Subsection \ref{sec:condmean} defines the ratio $\eta_m (x,y)$ on the basis of the derivatives of conditional mean. Recall that the conditional expectation $m(x,y)$ in (\ref{mean}) can be directly estimated by the Nadaraya–Watson estimator which is defined as
\begin{align}\label{nw}
\hat{m} (x,y) = \frac{\sum_{i = 1}^N t_i K_{h_x} (x - x_i)K_{h_y} (y - y_i)}{\sum_{i = 1}^N K_{h_x} (x - x_i)K_{h_y} (y - y_i)}.
\end{align}
The Nadaraya-Watson estimator is known to be consistent under appropriate bandwidth and Kernel function choice. A natural estimator for $m_x' (x,y)$ can be obtained by taking partial derivative of $\hat{m}_x (x,y)$ in (\ref{nw}) with respect to $x$, namely,
\begin{multline}\label{est-dx-mean}
\hat{m}_x' (x,y) =  \frac{ \left\{ \sum_{i = 1}^N t_i K_{h_x}' (x - x_i)K_{h_y} (y - y_i) \right\} \left\{ \sum_{i = 1}^N K_{h_x} (x - x_i)K_{h_y} (y - y_i) \right\} }{ \left\{ \sum_{i = 1}^N K_{h_x} (x - x_i)K_{h_y} (y - y_i) \right\}^2 } \\
- \frac{ \left\{ \sum_{i = 1}^N t_i K_{h_x} (x - x_i)K_{h_y} (y - y_i) \right\} \left\{ \sum_{i = 1}^N K_{h_x}' (x - x_i)K_{h_y} (y - y_i) \right\} }{ \left\{ \sum_{i = 1}^N K_{h_x} (x - x_i)K_{h_y} (y - y_i) \right\}^2 }.
\end{multline}
Similarly, a nonparametric estimator for $m_y' (x,y)$ is
\begin{multline}\label{est-dy-mean}
\hat{m}_y' (x,y) =  \frac{ \left\{ \sum_{i = 1}^N t_i K_{h_x} (x - x_i)K_{h_y}' (y - y_i) \right\} \left\{ \sum_{i = 1}^N K_{h_x} (x - x_i)K_{h_y} (y - y_i) \right\} }{ \left\{ \sum_{i = 1}^N K_{h_x} (x - x_i)K_{h_y} (y - y_i) \right\}^2 } \\
- \frac{ \left\{ \sum_{i = 1}^N t_i K_{h_x} (x - x_i)K_{h_y} (y - y_i) \right\} \left\{ \sum_{i = 1}^N K_{h_x} (x - x_i)K_{h_y}' (y - y_i) \right\} }{ \left\{ \sum_{i = 1}^N K_{h_x} (x - x_i)K_{h_y} (y - y_i) \right\}^2 }.
\end{multline}
By taking the ratio of (\ref{est-dx-mean}) and (\ref{est-dy-mean}), one can estimate $\eta_m(x,y)$ in (\ref{mean-proportion}) by
\begin{align*}
\hat{\eta}_m (x,y) = \frac{\hat{m}_x' (x,y)}{\hat{m}_y' (x,y)} = \frac{S_{t,dx} S(x,y) - S_{t}(x,y) S_{dx}(x,y)}{S_{t,dy}(x,y) S(x,y) - S_{t}(x,y) S_{dy}(x,y)},
\end{align*}
where
\begin{align*}
S_{t,dx}(x,y) &= \sum_{i = 1}^n t_i K_{h_x}' (x - x_i)K_{h_y} (y - y_i), \quad S_{dx}(x,y) = \sum_{i = 1}^n K_{h_x}' (x - x_i)K_{h_y} (y - y_i), \\
S_{t,dy}(x,y) &= \sum_{i = 1}^n t_i K_{h_x} (x - x_i)K_{h_y}' (y - y_i), \quad S_{dy}(x,y) = \sum_{i = 1}^n K_{h_x} (x - x_i)K_{h_y}' (y - y_i), \\
S_t(x,y) &= \sum_{i = 1}^n t_i K_{h_x} (x - x_i)K_{h_y} (y - y_i), \quad S(x,y) = \sum_{i = 1}^n K_{h_x} (x - x_i)K_{h_y} (y - y_i).
\end{align*}
Lastly, the ratio $\eta$ in (\ref{avg-mean-proportion}) can be estimated by using the ratio of the empirical mean of (\ref{est-dx-mean}) and (\ref{est-dy-mean}). The estimator is
\begin{align}\label{new-est2}
\hat{\eta} = \frac{\sum_{i = 1}^n \hat{m}_x' (x_i,y_i)}{\sum_{i = 1}^n \hat{m}_y' (x_i,y_i)}.
\end{align}
When the model is the single-index model in (\ref{eq-si-2}), $\eta_\pi (t,x,y) = \eta_m (x,y) = \eta$ for all $t,x,y$ and it is possible to compare the estimates. $\hat{\eta}$ is expected to be less sensitive to finite-sample biases than the averages of $\hat{\eta}_\pi (t,x,y)$ or $\hat{\eta}_m (x,y)$ over $t,x,y$. This is because $\hat{\eta}$ is constructed as a ratio of averaged estimates in the numerator and denominator, rather than as an average of the ratios.

In summary, $\eta_\pi (t,x,y)$, $\eta_m (x,y)$ and $\eta$ can be estimated without specifying the copula and without specifying the models for $S_1$ and $S_2$. The only other single risk approach that does not require a model for the copula has been developed by Willems et al.\ (2025), but it only partially identifies the model parameters.

Although all estimators for $\eta_\pi (t,x,y)$, $\eta_m (x,y)$ and $\eta$ are consistent under an appropriate bandwidth choice, the introduced approaches likely differ in their finite sample properties, which are studied in the next section with the help of simulations.



\section{Simulations \label{sec:sim}}
This section presents the results of Monte Carlo simulations to assess the finite-sample performance of the suggested approaches. The focus is on the estimation of $\beta_x/\beta_y$. In the first part, the different nonparametric estimators of Section \ref{sec:est} are applied to estimate the relative covariate effect. In the second part, the nonparametric approach is compared to commonly used (semi)parametric single-index models under correct and incorrect specification.

\paragraph{Estimation of $\eta_{\pi}$, $\eta_m$ and $\eta$.}
Several designs for different copulas and $\theta$s are considered. To assess the differences in finite sample behaviour, estimation is done by using the estimators for $\Lambda$, $\pi$ and $m$ and their respective derivatives. The simulation designs are similar to those in Hiabu et al.\ (2025) to allow for comparability. Stata 16MP is used for these simulations.

The margins of both risks are Weibull models with cumulative baseline hazards $\Lambda_{0j}(t)=\lambda_j t^{\eta_j}$ with $(\lambda_1, \eta_1, \lambda_2, \eta_2) = (0.5, 1, 1, 1)$. 
The covariate function for risk 1 is $\exp(x \beta_x + y \beta_y)$ with $(\beta_{x}, \beta_{y})=(1, 1)$. Therefore $S_1(t \mid x,y)=\exp \{ -0.5\exp(x+y) \}$ and $S_2(t)=\exp(-1)$. The copula is either the Clayton copula $C_{\theta}(s_1,s_2)=(s_1^{-\theta}+s_2^{-\theta}-1)^{-1/\theta}$  with $\theta > 0$ or the Gumbel copula $C_{\theta}(s_1,s_2)= \exp [ - \{ ( -\log s_1 )^\theta + ( -\log s_2 )^\theta \}^{ 1/\theta } ]$ with $\theta\in[1,\infty)$. The simulations are done for two different degrees of dependence with Kendall-$\tau\in \{0.1,0.8\}$, where one is rather weak and one is rather strong. The data generating process (DGP) requires the conditional distribution function of the second variable given the first under the specified copula, namely, $F (s_2 \mid s_1) =  \partial_1 C_{\theta}(s_1,s_2)$, as well as its inverse $F^{-1} (v_2 \mid s_1)$. The process of generating the data consists of 4 steps:

\begin{enumerate}
\item Generate two uniform random variables $s_1$ and $v_2$ on $[0,1]$ with $N$ independent random draws.

\item Obtain $N$ realisations of $s_2$ by computing $F^{-1} (v_2 \mid s_1)$:
\begin{itemize}
\item Clayton copula:
\[s_2=\left(1-s_1^{-\theta}+(v_2s_1^{\theta+1})^{-\theta/(\theta+1)}\right)^{-1/\theta}.\]
\item Gumbel copula: $s_2$ is obtained numerically by computing the root of
\begin{align*}
F (s_2 \mid s_1) - v_2 = 0.
\end{align*}
\end{itemize}

\item Randomly generate a sample of size $N$ of independent $x$ and $y$ with marginal distribution $N(0,1)$.  Obtain durations $t_1$ and $t_2$ given $x$ and $y$ by inverting the marginal Weibull survival:
\begin{align*}
t_1 &= \left[-\log(s_1)/\lambda_1^{\eta_1}/\exp(\beta_{x}x+\beta_yy)\right]^{1/\eta_1}, \\
t_2 &= \left[-\log(s_2)/\lambda_2^{\eta_2}\right]^{1/\eta_2}.
\end{align*}

\item Generate observed minimum duration and observed risk by relating $t_1$ and $t_2$.
\end{enumerate}
The DGP results in between $63\%$--$79\%$ of the $t_i$ corresponding to realisations of $T_1$, and the rest to $T_2$, depending on the choice of $C$ and $\tau$. Simulations are conducted for different sample sizes $N=5{,}000$, $N=10{,}000$, and $N=25{,}000$ with $100$ runs. The Epanechnikov Kernel is used for the nonparametric smoothing. The sensitivity of results with respect to the bandwidth choice is checked by setting $h_x = h_y=h$ to either $0.2$ or $0.3$. In this model, $\eta_{\pi}(t,x,y) = \eta_m(x,y) = \eta = \beta_x/\beta_y=1$ for all $t$, $x$, $y$. All presented estimation approaches of Section \ref{sec:est} are therefore applicable. Hiabu et al.\ (2025) provide numerical evidence of the finite-sample performance of a related estimator of a ratio of partial derivatives of $\pi$. Their results suggest that a large dataset of several $10{,}000$ observations is required in an application. This is confirmed by our simulations for the estimation of $\eta_{\pi}$. In our implementation, we evaluate $\hat{\eta}_\pi (t,x,y)$ and $\hat{\eta}_\Lambda (t,x,y)$ in (\ref{est-pi}) and (\ref{est-Lambda}), respectively, on a equidistant grid for $t$ and take the average over these grid points for $x = \bar{x} = \sum_{i = 1}^n x_i / n$ and $y = \bar{y} = \sum_{i = 1}^n y_i / n$, namely,
\begin{align*}
\hat{\eta}_\pi (\bar{x},\bar{y}) &= G^{-1} \sum_{g = 1}^G \hat{\eta}_\pi (t_g,\bar{x},\bar{y}) = G^{-1}\sum_{g=1}^N\frac{\hat{\pi}'_x (t_g \mid \bar{x},\bar{y}) }{\hat{\pi}'_y (t_g \mid \bar{x},\bar{y}) }, \\
\hat{\eta}_\Lambda (\bar{x},\bar{y}) &= G^{-1} \sum_{g = 1}^G \hat{\eta}_\Lambda (t_g,\bar{x},\bar{y}) = G^{-1}\sum_{g=1}^N\frac{\hat{\Lambda}_x' (t_g \mid \bar{x},\bar{y})}{\hat{\Lambda}_y' (t_g \mid \bar{x},\bar{y})},
\end{align*}
where $g=1,\ldots,G$ are the grid points on the support of $t$ with $G=500$. Note that the averaging is not done in the covariates to restrict the local estimation to areas with the greatest marginal densities of $x$ and $y$ and to speed up estimation. As already observed by Hiabu et al.\ (2025), this improves the numerical stability of the results. In a similar fashion, we also evaluate $\hat{\eta}_m (x,y)$ at $x = \bar{x}_n$ and $y = \bar{y}_n$, namely, $\hat{\eta}_m (\bar{x},\bar{y})$, which correspond to the center of mass of the covariates.

The simulation results for $\hat{\eta}$ are given in Table~\ref{T:sim_res5}. The mean of the realisations of $\hat{\eta}$ is very close to the true value for all sample sizes. The percentiles of the distribution of estimates get closer to the true values when $N$ increases. This shows desirable properties. At the same time, the intervals get rather wide for $N=5,000$ which suggests that this approach requires at least several thousand observations. Taking the average before division instead of averaging the ratios has therefore considerably improved the finite-sample performance. There is no need to choose values of $t$, $x$ and $y$ for the estimation because the estimator averages over all observations.

\begin{table}[!htbp]
\centering
\scriptsize
\caption{Simulation results for $\hat{\eta}$ using $\sum_i \hat{m}_x' (x_i,y_i)/\sum_i\hat{m}_y' (x_i,y_i)$ under the Gumbel or Clayton copula with $\tau = 0.1$ or $\tau = 0.8$. True $\eta=1$.}
\begin{tabular}{rccccc}
\hline\hline
& \multicolumn{2}{c}{Gumbel} & & \multicolumn{2}{c}{Clayton}\\
& $h=0.2$ & $h=0.3$ & & $h=0.2$ & $h=0.3$\\
\hline \\
& \multicolumn{5}{c}{$\tau=0.1$}\\
\multicolumn{3}{l}{$N=5,000$}\\[0.3cm]
mean $\hat{\eta}$ &  $0.9867$ & $0.9918$  &  & $0.9886$  &  $0.9860$   \\
5th, 95th percentile of $\hat{\eta}$ & $[0.7263,1.2303]$ & [0.7824,1.2382]&  & [0.7062,1.2499] & $[0.7567,1.2159]$  \\[0.3cm]
\multicolumn{3}{l}{$N=10,000$}\\[0.3cm]
mean $\hat{\eta}$ &  $1.0015$ & $0.9984$  &  & $0.9995$  &  $1.0056$   \\
5th, 95th percentile of $\hat{\eta}$ & $[0.8342,1.2217]$ & [0.8483,1.2278]&  & [0.8050,1.2290] & $[0.8344,1.2056]$   \\[0.3cm]
\multicolumn{3}{l}{$N=25,000$}\\[0.3cm]
mean $\hat{\eta}$ &  $1.0097$ & $1.0110$  &  & $0.9806$  &  $0.9823$   \\
5th, 95th percentile of $\hat{\eta}$ & $[0.8972,1.1267]$ & [0.9058,1.1359]&  & [0.8761,1.1122] & $[0.8780,1.1127]$ \\[0.5cm]
& \multicolumn{5}{c}{$\tau=0.8$}\\
\multicolumn{3}{l}{$N=5,000$}\\[0.3cm]
mean $\hat{\eta}$ &  $0.9496$ & $0.9676$  &  & $1.0136$  &  $1.0095$   \\
5th, 95th percentile of $\hat{\eta}$ & $[0.5432,1.5528]$ & [0.6242,1.5534]&  & [0.5575,1.5842] & $[0.5322,1.5892]$  \\[0.3cm]
\multicolumn{3}{l}{$N=10,000$}\\[0.3cm]
mean $\hat{\eta}$ &  $1.0063$ & $1.0044$  &  & $0.9990$  &  $1.0103$   \\
5th, 95th percentile of $\hat{\eta}$ & $[0.7037,1.3765]$ & [0.7074,1.4080]&  & [0.6684,1.4482] & $[0.7028,1.4590]$   \\[0.3cm]
\multicolumn{3}{l}{$N=25,000$}\\[0.3cm]
mean $\hat{\eta}$ &  $1.0267$ & $1.0332$  &  & $0.9813$  &  $0.9821$   \\
5th, 95th percentile of $\hat{\eta}$ & $[0.7714,1.2865]$ & [0.7857,1.2700]&  & [0.7691,1.2094] & $[0.7719,1.2314]$ \\
\hline
\hline
\end{tabular}
\label{T:sim_res5}
\end{table}


The results for the estimators $\hat{\eta}_{\pi} (\bar{x},\bar{y})$, $\hat{\eta}_{\Lambda} (\bar{x},\bar{y})$ and $\hat{\eta}_m (\bar{x},\bar{y})$ are given in Supplementary Material. When the estimation uses $\pi$ or $\Lambda$ and their derivatives, estimation is done on an equidistant grid for $t$ with 500 points on $[0.04,3.55]$. In these cases, appropriate trimming rules are applied to avoid instability of results when there are low observation numbers in some areas of the support or the denominators are very small. No trimming in the $t$ dimension is required when estimation is based on the conditional mean function. Tables \ref{T:sim_res1}--\ref{T:sim_res4} present the results for different bandwidths $h$, $\tau$ and copulas. It is apparent that these approaches require several 10,000 observations to get some stable results and that $\hat{\eta}$ has the best performance.

A selection of estimated nonparametric $\hat{\pi}$ and $\hat{\Lambda}$ and their derivatives are shown in Figure~\ref{F:sim1} to illustrate the role of the bandwidth choice. For $h=0.2$, the bias is smaller but the percentiles of the distribution are wider spread, showing the greater variance of the estimates. Figure~\ref{F:sim2} shows results for the distribution of the estimated $\hat{\eta}_{\pi}(t,\bar{x},\bar{y})$ as a function of $t$. It is apparent that in some areas of the support of $t$, the estimates are very noisy. A similar observation has already been made in Hiabu et al. (2025). To address this and improve the quality of the estimates, we trim the support of $t$ to $[0.05,2.5]$ in the case of the Gumbel copula and $[0.05,0.5]$ in the case of Clayton. By doing so, we trim the right end of the $t$ where estimates are found to be more noisy mainly due to less observations. This is denoted as trimming I: boundary trimming in the tables and is made visible by the dotted vertical lines in Figure \ref{F:sim2} for the Gumbel copula. An additional trimming rule ignores grid points for which denominator in $\hat{\eta}_{\pi}(t,\bar{x},\bar{y})$ is too close to zero. This is when $\hat{\Lambda}_y$ or $\hat{\pi}_y$ are less than $0.1$ as a small estimation bias in this estimates leads to a sizable bias of $\hat{\eta}_{\pi}(\bar{x},\bar{y})$ and $\hat{\eta}_{\Lambda}(\bar{x},\bar{y})$. This is denoted as trimming rule II: boundary and denominator trimming in the tables. Estimation on the grounds of the conditional mean functions does not require this trimming.

To sum up, the simulation results confirm that the suggested estimators converge to their true values and their variances decrease as the sample size $N$ increases. They also show that the purely nonparametric approach require samples of ideally at least several thousand observations. The preferable approach is the estimator $\hat{\eta}$ in (\ref{new-est2}), which divides the sample averages of nonparametric mean regression estimates. The approaches that average the ratio of two nonparametric estimates have considerably worse properties and require 10,000s of observations.

\paragraph{Comparison with (semi)parametric single-index models.}
The previous results suggested that the nonparametric approaches lack the precision of (semi)parametric approaches under correct specification and require larger data sets. Their advantage is that they are consistent for $\eta = \E[m_x'(X,Y)] / \E[m_y'(X,Y)]$ in (\ref{avg-mean-proportion}), but not for $\beta_x / \beta_y$ under weaker restrictions, in particular when the model is not a single-index model or when the single-index component is misspecified. In the following, we compare the performance of the estimated $\eta$ on the grounds of the nonparametric estimator in (\ref{new-est2}) and the ratio of estimated coefficients of the single-index models: the semiparametric Cox PH model, the AFT model with Weibull distribution and the PO model (compared Section \ref{sec:examples}). When the single-index model is correctly specified, it is more efficient and therefore preferable over the nonparametric approach. For non-single-index models, the (semi)parametric and the nonparametric cannot be used to consistently estimate $\beta_x / \beta_y$, although the nonparametric approach still consistently estimates $\eta$. When the single-index component is misspecified, the nonparametric approach is consistent for $m_x'(x,y) / m_y'(x,y)=\eta_{\pi}(x,y)$ and $\eta$, but not for $\beta_x / \beta_y$ as $\eta_{\pi} (x,y)$ depends on $x$ and $y$. We apply the DGP of the Weibull model of the previous simulations as the correctly specified model. The DGPs that violates the single-index property is the Two-hazards model of Example \ref{ex-5} in Subsection \ref{sec:nonSI} with $(a_1,b_1,a_2,b_2) = (1,1,0.5,1)$ and $(\beta_{x},\beta_{y})=(1,1)$. The model with misspecified single-index is the Weibull model with covariate function $\exp(\beta_x+\beta_y+\beta_{x^2}x^2)$ for risk 1 with $(\beta_{x},\beta_{y},\beta_{x^2})=(1,1,2)$, but the covariate function $\exp(\beta_x+\beta_y)$ is used for the estimation of the (semi)parametric models. The simulations are run for $N=5{,}000$ and $25{,}000$ and use the Clayton copula with Kendall-$\tau = 0.8$. The bandwidth is selected by 10-fold Cross-Validation. The statistical software R V.4.5.2 is used for these simulations.

\begin{table}[!htbp]
\centering
\scriptsize
\caption{Simulation results for $\hat{\eta}$ under the Clayton copula with $\tau=0.8$. Weibull AFT is a single-index model, while Two-Hazards is a non-single-index model and Weibull AFT $x^2$ is a single-index model with misspecified covariate function.}
\begin{tabular}{rccc}
\hline \hline
Estimator\hspace{0.3cm}\textbackslash \hspace{0.3cm}DGP & (a) & (b) & (c) \\
                                                        & Weibull AFT &  Two-Hazards & Weibull AFT $x^2$ \\
                                                        & $\eta=\beta_x/\beta_y=1$ & $\eta \neq \beta_x/\beta_y = 1$ & $\eta \neq \beta_x/\beta_y=1$ \\
\hline
\multicolumn{4}{l}{$N=5,000$}\\[0.5cm]
\multicolumn{1}{l}{Nonparametric}\\
mean $\hat{\eta}$ &  $0.9971$ &    $0.6487$ &   $0.6104$   \\
5th, 95th percentile of $\hat{\eta}$ & $[0.7393,1.3452]$ &  $[0.4507,0.8619]$ &   $[0.4391,0.7776]$  \\[0.3cm]
\multicolumn{1}{l}{Semiparametric Cox PH}\\
mean $\hat{\eta}$ &  $0.9952$ &    $0.6254$ &   $1.0442$   \\
5th, 95th percentile of $\hat{\eta}$ & $[0.9548,1.0397]$ &  $[0.5587,0.6872]$ &   $[0.8968,1.2069]$  \\[0.3cm]
\multicolumn{1}{l}{Parametric Weibull AFT}\\
mean $\hat{\eta}$ &  $0.9952$ &    $0.6137$ &   $1.0489$   \\
5th, 95th percentile of $\hat{\eta}$ & $[0.9539,1.0429]$ &  $[0.5494,0.6727]$ &   $[0.9003,1.2128]$  \\[0.3cm]
\multicolumn{1}{l}{Semiparametric PO}\\
mean $\hat{\eta}$ &  $0.9935$ &    $0.5693$ &   $0.8861$   \\
5th, 95th percentile of $\hat{\eta}$ & $[0.9454,1.0429]$ &  $[0.5123,0.6278]$ &   $[0.7566,1.0346]$  \\[0.5cm]
\multicolumn{4}{l}{$N=25,000$}\\[0.5cm]
\multicolumn{1}{l}{Nonparametric}\\
mean $\hat{\eta}$ &  $1.0079$ &    $0.6713$ &  $0.5879$   \\
5th, 95th percentile of $\hat{\eta}$ & $[0.8982,1.1312]$ &  $[0.5925,0.7781]$ &    $[0.5124,0.6534]$  \\[0.3cm]
\multicolumn{1}{l}{Semiparametric Cox PH}\\
mean $\hat{\eta}$ &  $1.0006$ &    $0.6330$ &   $1.0559$   \\
5th, 95th percentile of $\hat{\eta}$ & $[0.9803,1.0181]$ &  $[0.6044,0.6584]$ &   $[0.9909,1.1310]$  \\[0.3cm]
\multicolumn{1}{l}{Parametric Weibull AFT}\\
mean $\hat{\eta}$ &  $1.0008$ &    $0.6204$ &   $1.0611$   \\
5th, 95th percentile of $\hat{\eta}$ & $[0.9797,1.0183]$ &  $[0.5927,0.6459]$ &   $[0.9946,1.1356]$  \\[0.3cm]
\multicolumn{1}{l}{Semiparametric PO}\\
mean $\hat{\eta}$ &  $0.9997$ &    $0.5762$ &   $0.8972$   \\
5th, 95th percentile of $\hat{\eta}$ & $[0.9785,1.0185]$ &  $[0.5513,0.5996]$ &   $[0.8363,0.9629]$  \\[0.3cm]
\hline \hline
\end{tabular}
\label{T:sim_res6}
\end{table}

The results are shown in Table \ref{T:sim_res6}. Column (a) is the correctly specified single-index model. Column (b) is the non-single-index model and column (c) is the model with the misspecified single-index. It is apparent from (a) that all methods are consistent for $\eta$ and the conventional (semi)parametric models are more efficient when the underlying model is a correctly specified single-index model. It illustrates that the robustness property of the single-index models (Li and Duan, 1989) may carry over to settings with (dependent) censoring or competing risks. The results in (b) show that all models do not consistently estimate the ratio of parameters and they converge to different points except the Cox PH and the Weibull AFT models which give very similar results. The nonparametric approach converges in probability to $\eta = \E[m_x'(X,Y)] / \E[m_y'(X,Y)] = 0.6759$ in this setup, where the true value is approximated numerically using the Monte Carlo method to evaluate the expected values of the numerator and denominator of (\ref{ex-5-mc}).  The previous simulations have already confirmed that the nonparametric estimator for $\eta_\pi(x,y)$ is consistent and we omit the reporting. The results in (c) for the misspecified covariate function in a single-index model show a similar pattern. The true value $\eta = 0.5617$ in this setup is approximated numerically as well. The fact that the different estimators converge to different points when the model is not a correctly specified single-index model, provides evidence against this assumption. A bootstrap-based test can be utilised to detect systematic differences in the $\hat{\eta}$s of the different approaches.

\section{Empirical applications \label{sec:app}}
We present two applications to put our framework into practice. One data set is from economics to study unemployment duration, the other data set is medical data. It contains patient survival times in hospitals. For both applications, we first obtain the nonparametric estimator $\hat{\eta}$ in (\ref{new-est2}). Then, we also compute the ratios of regression parameters
\begin{align*}
\hat{\eta}_{\text{Cox}} = \frac{\hat{\beta}_x^\text{Cox}}{\hat{\beta}_y^\mathrm{Cox}}, \quad \hat{\eta}_{\text{PO}} = \frac{\hat{\beta}_x^\text{PO}}{\hat{\beta}_y^\text{PO}},
\end{align*}
where $\hat{\beta}_x^\text{Cox}$ and $\hat{\beta}_y^\text{Cox}$ ($\hat{\beta}_x^\text{PO}$ and $\hat{\beta}_y^\text{PO}$) are estimators for the regression parameters $\beta_x^\text{Cox}$ and $\beta_y^\text{Cox}$ ($\beta_x^\text{PO}$ and $\beta_y^\text{PO}$) under the Cox (PO) model, respectively. Under the exclusion restriction in (\ref{copula}), we consider the hypotheses
\begin{align*}
H_0: \eta = \frac{\beta_x^\mathrm{Cox}}{\beta_y^\mathrm{Cox}}~(\text{i.e., the Cox PH model holds})\quad &\text{versus} \quad H_1: \eta \not= \frac{\beta_x^\mathrm{Cox}}{\beta_y^\mathrm{Cox}},\\
H_0: \eta = \frac{\beta_x^\mathrm{PO}}{\beta_y^\mathrm{PO}}~(\text{i.e., the PO model holds})\quad &\text{versus} \quad H_1: \eta \not= \frac{\beta_x^\mathrm{PO}}{\beta_y^\mathrm{PO}}.
\end{align*}
These are tested with a bootstrap-based test with 400 repetitions, where the p-value is obtained from the nonparametric bootstrap distribution of the statistic. The bandwidth for the nonparametric estimation is chosen by 5-fold cross-validation.

\paragraph{Analysis of unemployment} We use the data set on unemployment duration. Risk 1 is the time (in days) until an unemployed person starts a job. End of data censoring and all other exit states such as out of the labour force (inactivity for various reasons) and self-employment are pooled into one censoring variable. It would be unrealistic to assume that time to start a job and the censoring time are independent. The sample is an extract of the IAB-Employment Sample 1975-2001 (IABS-R01) and is described in more detail in Wichert and Wilke (2008). It contains 21,685 observations, of which 3,017 (14.16\%) are censored. The covariate $X$ is age (in years). The covariate $Y$ is daily pre-unemployment wage (in EUR).

We obtain the following estimates: Nonparametric $\hat{\eta} = -3.8994$, Semiparametric Cox PH $\hat{\eta}_{\text{Cox}} = -4.0310$ and PO $\hat{\eta}_{\text{PO}} = -2.9920$. While the first two are similar, the PO estimate strongly differs from the first two. The p-value for testing $H_0: \eta = \beta_x^\mathrm{Cox} / \beta_y^\mathrm{Cox}$ is $0.725$, whereas the p-value for testing $H_0: \eta = \beta_x^\mathrm{PO}/\beta_y^\mathrm{PO}$ is $0.000$. The PO model is therefore clearly rejected, while there is no evidence against the Cox model. Note that $\hat{\eta}$ is a consistent estimate of $\E[m_x'(X,Y)] / \E[m_y'(X,Y)]$ and when the Cox model is not rejected, it becomes a consistent estimate of $\beta_x^\mathrm{Cox}/\beta_y^\mathrm{Cox}$. It means that one partial covariate effect on $S_1$ is positive, while the other is negative. The estimated covariate effect of $age$ is around 4-times larger in size than that of $wage$.

\paragraph{Analysis of time to death}
We use an extract of the Study to Understand Prognoses Preferences Outcomes and Risks of Treatment (SUPPORT) of Vanderbilt University which is made available by Bhatnagar et al.~(2022). The data contains time to death for 9,104 hospital patients, of which 2,904 (31.9\%) are censored. The covariate $X$ is age (in years). The covariate $Y$ is SPS (SUPPORT physiology score). We obtain the following estimates: Nonparametric $\hat{\eta} = 0.3660$, Semiparametric Cox PH $\hat{\eta}_{\text{Cox}} = 0.3243$ and PO $\hat{\eta}_{\text{PO}} = 0.2598$. The p-value for testing $H_0: \eta = \beta_x^\mathrm{Cox}/\beta_y^\mathrm{Cox}$ is $0.100$, whereas the p-value for testing $H_0: \eta = \beta_x^\mathrm{PO}/\beta_y^\mathrm{PO}$ is $0.000$. Here once again, there is strong evidence against the PO model. While the Cox model cannot be rejected at high levels, there is some weak evidence against it as well. Because $\hat{\eta}$ is positive, the sign of the partial covariate effects on $S_1$ is the same in this application.

The two applications have demonstrated that with samples of 10K or 20K observations, it is possible to reject the specification of popular (semi)parametric single-index models in situations with unknown dependent censoring or competing risks without that the model is identified. Because the Cox PH model is not rejected in both cases, our approach gives consistent estimates of the ratio of partial covariate effects on $S_1$, despite that $S_1$ is not identifiable.

\section*{Funding statement}
The authors declare that no external funding has been received to conduct the study.

\section*{Data availability statement}
The unemployment data are an extract of the IAB-Employment Sample 1975-2001 (IABS-R01) which is managed by the Institute for Employment Research (\url{https://fdz.iab.de}). The same sample has been used by Wichert and Wilke (2008) and has been made available as a public-use file by the data provider. The SUPPORT data is available through the R-package \texttt{casebase} by Bhatnagar et al. (2022). It is a subsample of the Study to Understand Prognoses Preferences Outcomes and Risks of Treatment (SUPPORT) of the Department of Biostatistics at Vanderbilt University. Upon acceptance of the paper, the estimation samples and R-code to replicate the results of the application will be made available for downloaded from: \url{https://github.com/ralfawilke/nonparreg}.

\clearpage
	
\setcounter{page}{1}

\renewcommand{\appendixpagename}{\centering Nonparametric regression with dependent censoring or competing risks \\ \centering SUPPLEMENTARY MATERIAL}
\begin{appendices}

\appendixpage
\centering
\author{
Jia-Han Shih\footnote{National Sun Yat-sen University, Department of Applied Mathematics, 80424, Kaohsiung, Taiwan, E--mail: jhshih@math.nsysu.edu.tw},
Simon M.S. Lo\footnote{United Arab Emirates University, Department of Economics and Finance, E--mail: losimonms@yahoo.com.hk},
Ralf A. Wilke\footnote{Copenhagen Business School, Department of Economics, Porcel{\ae}nshaven 16A, 2000 Frederiksberg, DK, E--mail: rw.eco@cbs.dk}
}

\vspace{4cm}

\thispagestyle{empty}
\pagebreak
			
\renewcommand{\thesection}{S\arabic{section}}
\renewcommand{\thetable}{S\arabic{table}}
\renewcommand{\thefigure}{S\arabic{figure}}

\pagebreak
\linespread{1.3}


\begin{landscape}
\begin{table}[!htbp]
\centering
\scriptsize
\caption{Simulation results for $\hat{\eta}_{\pi}$, $\hat{\eta}_{\Lambda}$ and $\hat{\eta}_m$ under the Gumbel copula with $\tau = 0.1$. True $\eta = 1$.}
	\begin{tabular}{rccccccc}
		\hline\hline
& \multicolumn{3}{c}{$h=0.2$} & & \multicolumn{3}{c}{$h=0.3$}\\
& no trimming & trimming I & trimming II & & no trimming & trimming I & trimming II  \\
\hline
\multicolumn{3}{l}{$N=10,000$}\\[0.3cm]
mean $\hat{\eta}_{\pi}(\bar{x},\bar{y})$ &  $1.1612$ & $1.0434$   & $1.0848$ &  &  $0.3028$   &  $-0.0774$     &  $1.2360$  \\
5th, 95th percentile of $\hat{\eta}_{\pi}(\bar{x},\bar{y})$ & $[-3.6390,5.2424]$ & $[-2.4787,4.8741]$ & $[-0.2044,2.5523]$ & & $[0.4177,2.5644]$ & $[0.5177,2.7999]$ & $[0.5781,2.5989]$\\
mean $\hat{\eta}_{\Lambda}(\bar{x},\bar{y})$ &  $-0.0939$ & $1.6523$   & $1.1480$ &  &  $1.1456$   &  $1.4206$     &  $1.2623$  \\
5th, 95th percentile of $\hat{\eta}_{\Lambda}(\bar{x},\bar{y})$ & $[-4.2138,9.9594]$ & $[-0.9041,12.5587]$ & $[-0.2532,3.2753]$ & & $[0.4169,2.5929]$ & $[0.5177,2.7342]$ & $[0.5796,2.7464]$\\
mean $\hat{\eta}_{m}(\bar{x},\bar{y})$ &  $0.1823$ &   &  &  &  $1.1060$   &       &    \\
5th, 95th percentile of $\hat{\eta}_{m}(\bar{x},\bar{y})$ & $[-0.1735,4.0250]$ &&  & & $[0.5636,1.9778]$ &  & \\[0.3cm]
\multicolumn{3}{l}{$N=25,000$}\\[0.3cm]
mean $\hat{\eta}_{\pi}(\bar{x},\bar{y})$ &  $1.3395$ & $0.8202$   & $1.2633$ &  &  $1.1081$   &  $1.0752$     &  $1.0767$  \\
5th, 95th percentile of $\hat{\eta}_{\pi}(\bar{x},\bar{y})$ & $[-0.9655,4.8839]$ & $[-0.4928,2.8807]$ & $[0.4517,2.8822]$ & & $[0.7093,1.5450]$ & $[0.7471,1.4623]$ & $[0.7471,1.4623]$\\
mean $\hat{\eta}_{\Lambda}(\bar{x},\bar{y})$ &  $-0.6871$ & $1.2128$   & $1.3055$ &  &  $1.1077$   &  $1.0743$     &  $1.0768$  \\
5th, 95th percentile of $\hat{\eta}_{\Lambda}(\bar{x},\bar{y})$ & $[-3.0534,2.8380]$ & $[-0.8468,3.4866]$ & $[0.4484,2.7232]$ & & $[0.7100,1.5491]$ & $[0.7466,1.4644]$ & $[0.7466,1.4644]$\\
mean $\hat{\eta}_{m}(\bar{x},\bar{y})$ &  $1.1824$ &   &  &  &  $1.0499$   &       &    \\
5th, 95th percentile of $\hat{\eta}_{m}(\bar{x},\bar{y})$ & $[0.4294,2.3902]$ &&  & & $[0.7311,1.4227]$ &  & \\[0.3cm]
\multicolumn{3}{l}{$N=50,000$}\\[0.3cm]
mean $\hat{\eta}_{\pi}(\bar{x},\bar{y})$ &  $1.094$ & $0.9392$   & $1.0553$ &  &  $1.0381$   &  $1.0269$     &  $1.0266$  \\
5th, 95th percentile of $\hat{\eta}_{\pi}(\bar{x},\bar{y})$ & $[0.5839,2.2877]$ & $[0.5560,1.5551]$ & $[0.5529,1.5346]$ & & $[0.8143,1.3474]$ & $[0.7842,1.2725]$ & $[0.7842,1.2725]$\\
mean $\hat{\eta}_{\Lambda}(\bar{x},\bar{y})$ &  $1.2737$ & $1.1303$   & $1.0617$ &  &  $1.0390$   &  $1.0269$     &  $1.0266$  \\
5th, 95th percentile of $\hat{\eta}_{\Lambda}(\bar{x},\bar{y})$ & $[0.5867,2.6366]$ & $[0.5032,1.6132]$ & $[0.5889,1.5563]$ & & $[0.8148,1.3462]$ & $[0.7830,1.2725]$ & $[0.7830,1.2725]$\\
mean $\hat{\eta}_{m}(\bar{x},\bar{y})$ &  $0.9865$ &   &  &  &  $1.0141$   &       &    \\
5th, 95th percentile of $\hat{\eta}_{m}(\bar{x},\bar{y})$ & $[0.5408,1.4277]$ &&  & & $[0.7782,1.2640]$ &  & \\[0.3cm]
\hline
\hline
\multicolumn{7}{l}{\footnotesize Note: trimming I: boundary trimming, trimming II: boundary and denominator trimming}
\end{tabular}
\label{T:sim_res1}
\end{table}
\end{landscape}

\begin{landscape}
\begin{table}[!htbp]
\centering
\scriptsize
\caption{Simulation results for $\hat{\eta}_{\pi}$, $\hat{\eta}_{\Lambda}$ and $\hat{\eta}_m$ under the Gumbel copula with $\tau = 0.8$. True $\eta = 1$.}
	\begin{tabular}{rccccccc}
		\hline\hline
& \multicolumn{3}{c}{$h=0.2$} & & \multicolumn{3}{c}{$h=0.3$}\\
& no trimming & trimming I & trimming II & & no trimming & trimming I & trimming II  \\
\hline
\multicolumn{3}{l}{$N=10,000$}\\[0.3cm]
mean $\hat{\eta}_{\pi}(\bar{x},\bar{y})$ &  $1.8463$ & $0.8165$   & $0.2957$ &  &  $-1.7886$   &  $0.2533$     &  $0.7338$  \\
5th, 95th percentile of $\hat{\eta}_{\pi}(\bar{x},\bar{y})$ & $[-6.2924,8.0688]$ & $[-4.1398,6.7546]$ & $[-1.6802,1.7200]$ & & $[-3.4945,3.5857]$ & $[-3.8297,4.7962]$ & $[-0.9383,2.6051]$\\
mean $\hat{\eta}_{\Lambda}(\bar{x},\bar{y})$ &  $-0.0690$ & $0.1244$   & $0.2620$ &  &  $0.8915$   &  $1.0380$     &  $0.7002$  \\
5th, 95th percentile of $\hat{\eta}_{\Lambda}(\bar{x},\bar{y})$ & $[-5.2989,5.8529]$ & $[0.1245,7.0039]$ & $[-1.9730,2.1788]$ & & $[-2.5079,5.4999]$ & $[-2.7333,6.5770]$ & $[-1.6476,3.1162]$\\
mean $\hat{\eta}_{m}(\bar{x},\bar{y})$  &  $1.3417$ &   &  &  &  $1.2613$   &       &    \\
5th, 95th percentile of $\hat{\eta}_{m}(\bar{x},\bar{y})$ & $[-4.6270,5.4684]$ &&  & & $[-4.4760,3.1365]$ &  & \\[0.3cm]
\multicolumn{3}{l}{$N=25,000$}\\[0.3cm]
mean $\hat{\eta}_{\pi}(\bar{x},\bar{y})$  &  $5.4444$ & $6.5357$   & $0.1612$ &  &  $0.0897$   &  $-0.5369$     &  $1.0007$  \\
5th, 95th percentile of $\hat{\eta}_{\pi}(\bar{x},\bar{y})$ & $[-3.5916,10.7829]$ & $[-4.7560,8.2996]$ & $[-1.6449,2.0648]$ & & $[-12.1987,8.8592]$ & $[-17.3816,6.6042]$ & $[-0.4238,2.4764]$\\
mean $\hat{\eta}_{\Lambda}(\bar{x},\bar{y})$ &  $-2.2828$ & $-3.2733$   & $0.1717$ &  &  $0.1441$   &  $1.4380$     &  $1.0414$  \\
5th, 95th percentile of $\hat{\eta}_{\Lambda}(\bar{x},\bar{y})$  & $[-10.2732,6.6499]$ & $[-7.7828,6.7319]$ & $[-1.8503,2.6629]$ & & $[-4.2799,6.8051]$ & $[-4.006,7.0498]$ & $[-0.3106,2.4160]$\\
mean $\hat{\eta}_{m}(\bar{x},\bar{y})$  &  $0.3970$ &   &  &  &  $1.1902$   &       &    \\
5th, 95th percentile of $\hat{\eta}_{m}(\bar{x},\bar{y})$ & $[-9.6467,9.0209]$ &&  & & $[-0.2960,4.5731]$ &  & \\[0.3cm]
\multicolumn{3}{l}{$N=50,000$}\\[0.3cm]
mean $\hat{\eta}_{\pi}(\bar{x},\bar{y})$ &  $0.9010$ & $1.1743$   & $0.4268$ &  &  $2.9524$   &  $3.8444$     &  $1.2234$  \\
5th, 95th percentile of $\hat{\eta}_{\pi}(\bar{x},\bar{y})$ & $[-3.6684,9.4744]$ & $[-3.2962,10.8197]$ & $[-1.2198,2.0436]$ & & $[0.04832,5.8561]$ & $[-0.0564,8.0571]$ & $[0.3411,3.5164]$\\
mean $\hat{\eta}_{\Lambda}(\bar{x},\bar{y})$ &  $-4.2229$ & $-6.9371$   & $0.4447$ &  &  $1.1079$   &  $1.1601$     &  $1.2999$  \\
5th, 95th percentile of $\hat{\eta}_{\Lambda}(\bar{x},\bar{y})$ & $[-14.0540,10.9535]$ & $[-12.4526,8.3478]$ & $[-1.5348,2.2304]$ & & $[-0.3769,4.8765]$ & $[-1.2778,5.9224]$ & $[0.3451,3.6447]$\\
mean $\hat{\eta}_{m}(\bar{x},\bar{y})$ &  $4.0054$ &   &  &  &  $1.2852$   &       &    \\
5th, 95th percentile of $\hat{\eta}_{m}(\bar{x},\bar{y})$ & $[-3.2487,7.1045]$ &&  & & $[0.3108,3.6897]$ &  & \\[0.3cm]
\hline
\hline
\multicolumn{7}{l}{\footnotesize Note: trimming I: boundary trimming, trimming II: boundary and denominator trimming}
\end{tabular}
\label{T:sim_res2}
\end{table}
\end{landscape}

\begin{landscape}
\begin{table}[!htbp]
\centering
\scriptsize
\caption{Simulation results for $\hat{\eta}_{\pi}$, $\hat{\eta}_{\Lambda}$ and $\hat{\eta}_m$ under the Clayton copula with $\tau = 0.1$. True $\eta = 1$.}
	\begin{tabular}{rccccccc}
		\hline\hline
& \multicolumn{3}{c}{$h=0.2$} & & \multicolumn{3}{c}{$h=0.3$}\\
& no trimming & trimming I & trimming II & & no trimming & trimming I & trimming II  \\
\hline
\multicolumn{3}{l}{$N=10,000$}\\[0.3cm]
mean $\hat{\eta}_{\pi}(\bar{x},\bar{y})$ &  $-4.35780$ & $0.0746$   & $0.0746$ &  &  $1.8866$   &  $0.8291$     &  $0.8291$  \\
5th, 95th percentile of $\hat{\eta}_{\pi}(\bar{x},\bar{y})$ & $[-6.5505,4.6196]$ & $[-2.1456,2.0357]$ & $[-2.1456,2.0357]$ & & $[-5.5474,4.2420]$ & $[-0.8960,3.0681]$ & $[-.8961,3.0681]$\\
mean $\hat{\eta}_{\Lambda}(\bar{x},\bar{y})$ &  $0.5755$ & $-0.0319$   & $0.0502$ &  &  $5.3327$   &  $1.2241$     &  $0.8668$  \\
5th, 95th percentile of $\hat{\eta}_{\Lambda}(\bar{x},\bar{y})$ & $[-3.8227,6.6187]$ & $[-7.9394,7.5409]$ & $[-2.1155,2.1458]$ & & $[-2.7382,5.0315]$ & $[-4.2073,4.6337]$ & $[-1.1874,3.4027]$\\
mean $\hat{\eta}_{m}(\bar{x},\bar{y})$ &  $-1.5885$ &   &  &  &  $1.0709$   &       &    \\
5th, 95th percentile of $\hat{\eta}_{m}(\bar{x},\bar{y})$ & $[-8.6183,6.1173]$ &&  & & $[-2.8726,5.6834]$ &  & \\[0.3cm]
\multicolumn{3}{l}{$N=25,000$}\\[0.3cm]
mean $\hat{\eta}_{\pi}(\bar{x},\bar{y})$ &  $0.9709$ & $0.5195$   & $0.5195$ &  &  $1.2518$   &  $1.1416$     &  $1.1416$  \\
5th, 95th percentile of $\hat{\eta}_{\pi}(\bar{x},\bar{y})$ & $[-4.2450,5.2034]$ & $[-1.2936,2.2908]$ & $[-1.2936,2.2908]$ & & $[-0.8173,4.0099]$ & $[0.1866,2.5247]$ & $[0.1866,2.5247]$\\
mean $\hat{\eta}_{\Lambda}(\bar{x},\bar{y})$ &  $-1.3035$ & $-0.2664$   & $0.4929$ &  &  $1.3561$   &  $1.2732$     &  $1.1382$  \\
5th, 95th percentile of $\hat{\eta}_{\Lambda}(\bar{x},\bar{y})$ & $[-7.2909,3.2007]$ & $[-11.2414,9.5439]$ & $[-1.6001,2.1170]$ & & $[-0.3091,3.8454]$ & $[-0.1232,3.3639]$ & $[0.1879,2.5540]$\\
mean $\hat{\eta}_{m}(\bar{x},\bar{y})$ &  $0.8223$ &   &  &  &  $1.1085$   &       &    \\
5th, 95th percentile of $\hat{\eta}_{m}(\bar{x},\bar{y})$ & $[-1.7791,3.9124]$ &&  & & $[0.4981,2.1654]$ &  & \\[0.3cm]
\multicolumn{3}{l}{$N=50,000$}\\[0.3cm]
mean $\hat{\eta}_{\pi}(\bar{x},\bar{y})$ &  $2.9020$ & $1.0492$   & $1.0492$ &  &  $0.8691$   &  $1.1375$     &  $1.1375$  \\
5th, 95th percentile of $\hat{\eta}_{\pi}(\bar{x},\bar{y})$ & $[-3.8346,8.4024]$ & $[-0.2543,3.2406]$ & $[-0.2543,3.2406]$ & & $[0.4641,2.4983]$ & $[0.4494,2.0255]$ & $[0.4494,2.0255]$\\
mean $\hat{\eta}_{\Lambda}(\bar{x},\bar{y})$ &  $2.2330$ & $-1.1672$   & $1.0916$ &  &  $1.2485$   &  $1.0592$     &  $1.1440$  \\
5th, 95th percentile of $\hat{\eta}_{\Lambda}(\bar{x},\bar{y})$ & $[-4.2416,5.5718]$ & $[-3.4766,5.5382]$ & $[-0.2671,3.3076]$ & & $[0.2255,2.6786]$ & $[0.3416,2.2478]$ & $[0.4696,2.0793]$\\
mean $\hat{\eta}_{m}(\bar{x},\bar{y})$ &  $1.4825$ &   &  &  &  $1.0984$   &       &    \\
5th, 95th percentile of $\hat{\eta}_{m}(\bar{x},\bar{y})$ & $[0.2295,4.2661]$ &&  & & $[0.6557,1.8400]$ &  & \\[0.3cm]
\hline
\hline
\multicolumn{7}{l}{\footnotesize Note: trimming I: boundary trimming, trimming II: boundary and denominator trimming}
\end{tabular}
\label{T:sim_res3}
\end{table}
\end{landscape}

\begin{landscape}
\begin{table}[!htbp]
\centering
\scriptsize
\caption{Simulation results for $\hat{\eta}_{\pi}$, $\hat{\eta}_{\Lambda}$ and $\hat{\eta}_m$ under the Clayton copula with $\tau = 0.8$. True $\eta = 1$.}
	\begin{tabular}{rccccccc}
		\hline\hline
& \multicolumn{3}{c}{$h=0.2$} & & \multicolumn{3}{c}{$h=0.3$}\\
& no trimming & trimming I & trimming II & & no trimming & trimming I & trimming II  \\
\hline
\multicolumn{3}{l}{$N=10,000$}\\[0.3cm]
mean $\hat{\eta}_{\pi}(\bar{x},\bar{y})$ &  $2.1767$ & $-0.0285$   & $-0.0284$ &  &  $-0.3005$   &  $0.1228$     &  $0.1228$  \\
5th, 95th percentile of $\hat{\eta}_{\pi}(\bar{x},\bar{y})$  & $[-11.1661,7.1969]$ & $[-2.4220,1.5839]$ & $[-2.4220,1.5839]$ & & $[-9.0571,7.1984]$ & $[-2.1346,1.9207]$ & $[-2.1346,1.9207]$\\
mean $\hat{\eta}_{\Lambda}(\bar{x},\bar{y})$  &  $0.9905$ & $-1.2620$   & $-0.0927$ &  &  $1.2997$   &  $0.2701$     &  $0.1424$  \\
5th, 95th percentile of $\hat{\eta}_{\Lambda}(\bar{x},\bar{y})$ & $[-3.8712,8.0429]$ & $[-19.5558,3.8210]$ & $[-2.4942,1.5550]$ & & $[-4.7490,5.5179]$ & $[-3.9731,6.6296]$ & $[-2.1905,2.0261]$\\
mean $\hat{\eta}_{m}(\bar{x},\bar{y})$ &  $-8.7675$ &   &  &  &  $-0.5581$   &       &    \\
5th, 95th percentile of $\hat{\eta}_{m}(\bar{x},\bar{y})$ & $[-8.9316,8.0082]$ &&  & & $[-3.2967,6.8237]$ &  & \\[0.3cm]
\multicolumn{3}{l}{$N=25,000$}\\[0.3cm]
mean $\hat{\eta}_{\pi}(\bar{x},\bar{y})$ &  $0.1857$ & $-0.1711$   & $-0.1711$ &  &  $4.2212$   &  $0.3599$     &  $0.3599$  \\
5th, 95th percentile of $\hat{\eta}_{\pi}(\bar{x},\bar{y})$ & $[-9.3946,11.8305]$ & $[-2.0212,2.0472]$ & $[-2.0212,2.0472]$ & & $[-5.9141,14.1959]$ & $[-1.5959,2.4209]$ & $[-1.5959,2.4209]$\\
mean $\hat{\eta}_{\Lambda}(\bar{x},\bar{y})$ &  $-5.9138$ & $-2.1169$   & $-0.1562$ &  &  $-0.1304$   &  $2.9763$     &  $0.4114$  \\
5th, 95th percentile of $\hat{\eta}_{\Lambda}(\bar{x},\bar{y})$ & $[-14.7657,6.9706]$ & $[-5.7411,2.7060]$ & $[-2.3037,2.0794]$ & & $[-9.7981,12.8433]$ & $[-1.9714,13.4558]$ & $[-1.2021,2.5000]$\\
mean $\hat{\eta}_{m}(\bar{x},\bar{y})$ &  $-0.2561$ &   &  &  &  $0.4879$   &       &    \\
5th, 95th percentile of $\hat{\eta}_{m}(\bar{x},\bar{y})$  & $[-8.3296,8.3212]$ &&  & & $[-4.5019,5.7117]$ &  & \\[0.3cm]
\multicolumn{3}{l}{$N=50,000$}\\[0.3cm]
mean $\hat{\eta}_{\pi}(\bar{x},\bar{y})$ &  $-0.8661$ & $0.2057$   & $.2057$ &  &  $0.4597$   &  $0.7847$     &  $0.7847$  \\
5th, 95th percentile of $\hat{\eta}_{\pi}(\bar{x},\bar{y})$ & $[-13.8462,3.4500]$ & $[-1.5599,2.5771]$ & $[-1.5599,2.5770]$ & & $[-6.8906,5.2867]$ & $[-0.6204,2.2352]$ & $[-0.6204,2.2352]$\\
mean $\hat{\eta}_{\Lambda}(\bar{x},\bar{y})$  &  $132.2612$ & $2.8237$   & $0.2112$ &  &  $-0.35672$   &  $0.67561$     &  $0.8162$  \\
5th, 95th percentile of $\hat{\eta}_{\Lambda}(\bar{x},\bar{y})$ & $[-16.3917,5.5396]$ & $[-5.3536,24.9165]$ & $[-1.6172,2.6039]$ & & $[-6.8886,4.3557]$ & $[-4.3075,4.8439]$ & $[-0.7286,2.3439]$\\
mean $\hat{\eta}_{m}(\bar{x},\bar{y})$ &  $-16.0181$ &   &  &  &  $0.8900$   &       &    \\
5th, 95th percentile of $\hat{\eta}_{m}(\bar{x},\bar{y})$  & $[-8.1384,5.4104]$ &&  & & $[-5.8509,6.4618]$ &  & \\[0.3cm]
\hline
\hline
\multicolumn{7}{l}{\footnotesize Note: trimming I: boundary trimming, trimming II: boundary and denominator trimming}
\end{tabular}
\label{T:sim_res4}
\end{table}
\end{landscape}

\pagebreak

\begin{figure}[!htbp]
	\centering
    (a) $h=0.2$ \hspace{7cm} \\
	\includegraphics[width=0.8\textwidth]{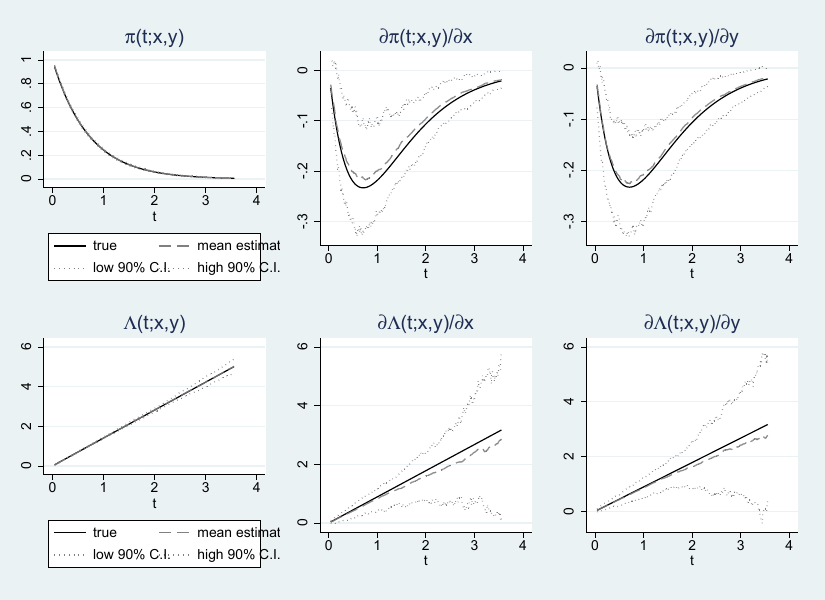} \vspace{0.2cm}\\
(b) $h=0.3$ \\
    \includegraphics[width=0.8\textwidth]{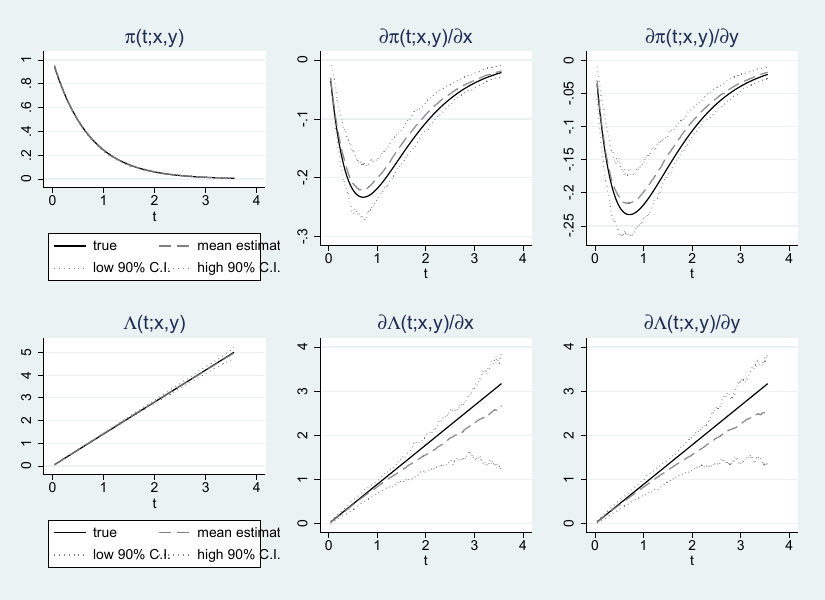}\\
	\caption{Simulation results for $N=50,000$, Gumbel with $\tau=0.1$ from 100 runs: true functions $\pi(t \mid \bar{x},\bar{y})$, $\Lambda(t \mid \bar{x},\bar{y})$, $\frac{\partial \pi(t\mid \bar{x},\bar{y})}{\partial z}$ for $z\in\{x,y\}$ and $\frac{\partial \Lambda(t\mid \bar{x},\bar{y})}{\partial z}$ for $z\in\{x,y\}$ in $t$ (solid black lines), mean of their nonparametric estimates (dashed gray lines) and 5'th and 95'th percentiles of the respective distributions (grey dots).}
	\label{F:sim1}
\end{figure}

\begin{figure}[!htbp]
    \centering
    (a) $\hat{\eta}_{\pi}(t,\bar{x},\bar{y})$ \hspace{4.5cm} (b) $\hat{\eta}_\Lambda(t,\bar{x},\bar{y})$ \\
    \includegraphics[scale=0.55]{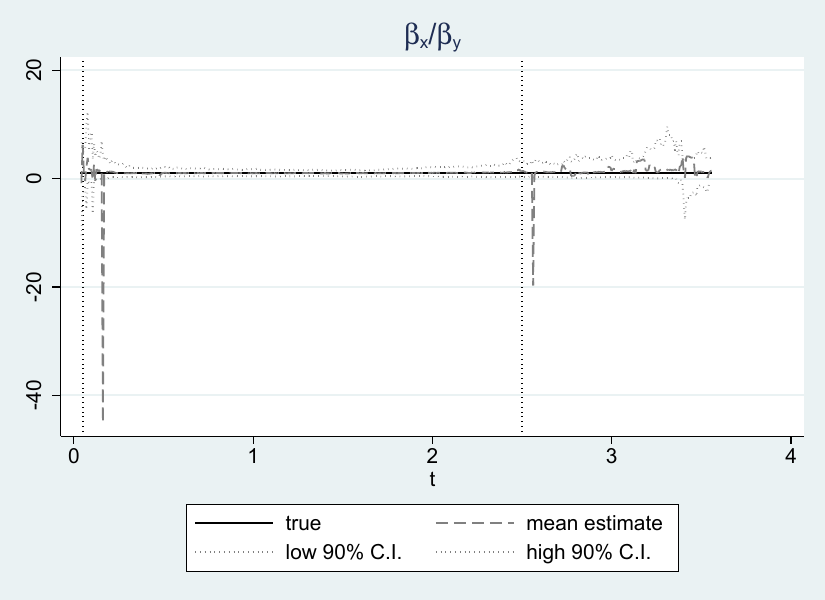} \hspace{0.2cm}
    \includegraphics[scale=0.55]{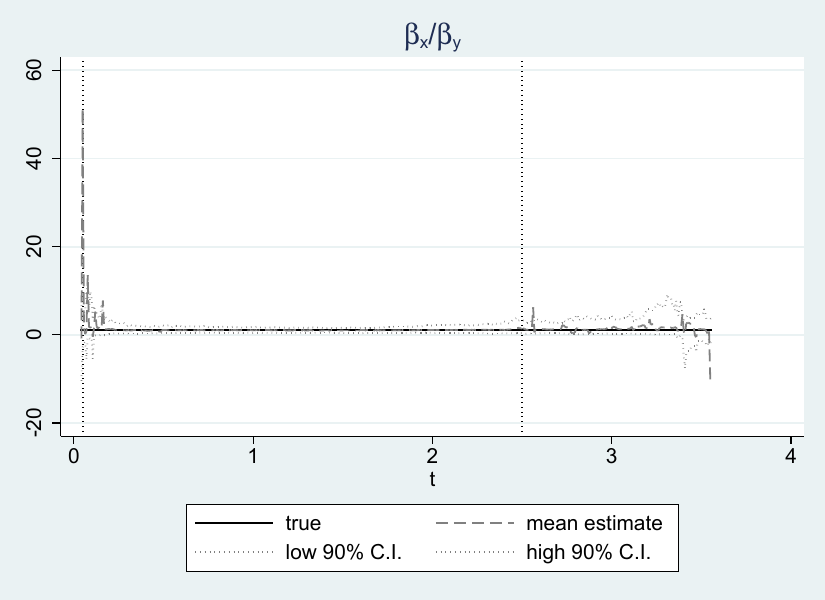}
	\caption{Simulation results for $N=50,000$ and $h=0.2$ under the Gumbel copula with $\tau=0.1$ from 100 runs: true $\eta_\pi (t,\bar{x},\bar{y}) = \beta_x/\beta_y = 1$ for all $t$ (solid black line), mean of $\hat{\eta}_\pi(t,\bar{x},\bar{y})$ and $\hat{\eta}_\Lambda(t,\bar{x},\bar{y})$ (dashed grey line) and 5th and 95th percentiles of the distribution of $\hat{\eta}_\pi(t,\bar{x},\bar{y})$ and $\hat{\eta}_\Lambda(t,\bar{x},\bar{y})$.}
	\label{F:sim2}
\end{figure}

\end{appendices}

\end{document}